%% file: main.tex
\documentclass[preprint,12pt,3p]{elsarticle}
\usepackage{amssymb}
\usepackage{amsmath,amsfonts,amsthm}
\usepackage{comment}
\usepackage{natbib}
\usepackage[colorlinks=true,linkcolor=red,citecolor=blue,urlcolor=magenta]{hyperref}
\usepackage[ruled,linesnumbered,vlined]{algorithm2e}
\usepackage{booktabs}
\usepackage{comment}

\theoremstyle{plain}
\newtheorem{theorem}{Theorem}[section]
\newtheorem{lemma}[theorem]{Lemma}

\newtheorem{corollary}[theorem]{Corollary}
\theoremstyle{definition}
\newtheorem{definition}[theorem]{Definition}
\newtheorem{remark}[theorem]{Remark}
\newtheorem{assumption}[theorem]{Assumption}

\begin{document}

\begin{frontmatter}

\title{Applying Two-Grid Preconditioner for Subsurface Flow Simulation using Attention-enhanced Hybrid Network to Accelerate Multiscale Discretization in High-contrast Media} %% Article title

%% use optional labels to link authors explicitly to addresses:
%% \author[label1,label2]{}
%% \affiliation[label1]{organization={},
%%             addressline={},
%%             city={},
%%             postcode={},
%%             state={},
%%             country={}}
%%
%% \affiliation[label2]{organization={},
%%             addressline={},
%%             city={},
%%             postcode={},
%%             state={},
%%             country={}}

\author[xjtlu,uol]{Peiqi Li} %% Author name
\ead{Peiqi.Li23@student.xjtlu.edu.cn}
\author[xjtlu]{Jie Chen\corref{correspond}}
\ead{Jie.Chen01@xjtlu.edu.cn}
\author[eith]{Shibin Fu}
\ead{sfu@eitech.edu.cn}

\cortext[correspond]{Corresponding author.}

%% Author affiliation
\affiliation[xjtlu]{organization={Department of Applied Mathemaitcs, Xi'an Jiaotong-Liverpool University},
            postcode={215123}, 
            city={Suzhou},
            country={China}}
\affiliation[uol]{organization={Department of Mathematical Sciences, University of Liverpool},
                  postcode={L69 3BX},
                  city={Liverpool},
                  country={China}}
\affiliation[eith]{organization={Eastern Institute for Advanced Study, Eastern Institute of Technology},
                   postcode={315200},
                   city={Ningbo},
                   country={China}}

%% Abstract
\begin{abstract}
    In this paper, we study the efficient numerical solution of Darcy equations in strongly heterogeneous media with high-contrast permeability and propose a hybrid framework that combines learning with multiscale numerical methods. The learning component is used for the prediction of multiscale basis functions in the mixed generalized multiscale finite element method (mixed GMsFEM), with the goal of reducing the repeated local computations required in the offline stage. Once these basis functions are predicted, the global system is assembled and the pressure field is computed by a two-grid preconditioned solver. The resulting method accelerates the costly local basis-construction stage while retaining the multiscale discretization and preconditioned iterative structure of the underlying solver. Numerical experiments on two-dimensional heterogeneous Darcy problems show that the proposed framework yields more accurate final pressure reconstruction than several representative learning-based methods and remains stable under strong heterogeneity and high-contrast coefficients. In comparison with the traditional mixed GMsFEM, its main advantage lies in the efficiency of the basis-generation stage, while the quality of the global solve is still ensured by the two-grid preconditioner. These results indicate that accelerating multiscale basis construction through learning, while preserving a mature numerical solver for the global problem, provides a viable approach for high-resolution Darcy-type simulations.
\end{abstract}

%% Keywords
\begin{keyword}
    Multiscale \sep Darcy Equation \sep Two-Grid Preconditioner \sep Neural Network \sep Attention \sep High-contrast Media
\end{keyword}

\end{frontmatter}

\input{section1_introduction}
\input{section2_preliminaries}
\input{section3_neural_operator}
\input{section4_gmsfem_two_grid_preconditioner}
\input{section_convergence}
\input{section5_numerical_experiments}
\input{section7_conclusion}

\input{sec_supp_info.tex}

\bibliographystyle{elsarticle-harv}
\bibliography{cas-refs}

\appendix
\newpage
\input{appendix_multiscale_solution}
\input{appendix_preconditioned_solutions}
\end{document}

%% file: section1_introduction.tex
\section{Introduction}
The Darcy equation is the fundamental mathematical model for describing the single-phase seepage process in porous media. It is widely applied in issues such as groundwater flow, oil and gas reservoir development, pollutant migration, and carbon dioxide sequestration. The actual underground medium often exhibits significant strong heterogeneity, multi-scale structure, and high contrast coefficient distribution, which causes the flow process to simultaneously contain local fine-scale transformations and global cross-scale coupling characteristics in space. To accurately depict such complex behaviors, numerical simulations usually require discretizing the governing equations on high-resolution grids. This step brings a significant computational burden of assembling and solving large-scale linear systems. Therefore, how to maintain the solution accuracy while reducing the computational cost caused by high resolution has always been one of the core issues in numerical simulation of porous medium flow \cite{bear2013dynamics}.

In response to these difficulties, multiscale methods provide an important framework for approximating and accelerating problems with strong heterogeneity in Darcy equations. In general, the core idea of these methods lies in effectively transferring complex information from the fine-scale medium to the coarser-scale calculations through local equivalence, basis function compression, or low-rank representation, thereby controlling the degree of freedom while retaining key physical characteristics. According to the different implementation approaches, the related methods can be roughly classified into several categories: One category is represented by homogenization and upscaling methods \cite{jivok2012homogenization,wen1996upsacling}, which embed the average effect of the fine-scale medium into the coarse-grid model by constructing equivalent permeability or equivalent transmission relationships in local regions; the other category is based on basis function construction strategies such as multiscale finite element method (MsFEM) and its extensions \cite{hou1997multiscale,efendiev2000convergence,efendiev2009multiscale,efendiev2007multiscale}. It generates multiscale basis functions containing fine-scale information through local boundary value problems or local spectral problems (LSP), and realizes global coupling in the coarse-scale space. Additionally, the model reduction method starts from the low-rank structure of the system response space and reduces the online solution cost of large-scale systems through snapshot sampling, basis vector extraction, and projection approximation \cite{quarteroni2014reduced,amsallem2011online}. For high heterogeneity and multi-scale Darcy flow problems, these methods alleviate the computational cost brought by direct solving of fine grids from different perspectives, but they also face different limitations, such as the limited resolution of equivalent parameter models in resolving local high-contrast details, and the insufficient adaptability of traditional reduction methods to parameter variations and complex local structures. In contrast, the generalized multiscale finite element method (GMsFEM) \cite{efendiev2013generalized,chung2014adaptive,chung2018constraint,chung2018generalized,chung2018fast} constructs snapshot spaces in local regions and further extracts dominant modes by using local spectral decomposition, thereby forming low-dimensional multiscale basis functions that can systematically represent the effects of fine-scale media. This shows stronger adaptability in high-contrast flow problems. On this basis, the mixed GMsFEM further combines multiscale basis functions with mixed finite element discretization, making the resulting velocity field satisfy local conservation properties, and thus is particularly suitable for underground flow simulation \cite{chung2015mixed}. A large number of studies have shown that these methods can better retain important fine-scale information on coarser grids and significantly improve the approximation quality of heterogeneous problems \cite{harder2013family,gregoire2005multiscale}. However, their bottlenecks are also clear: whether it is the local snapshot space solution, LSP solution, or the subsequent offline basis function generation, they all rely on repeated local computations. When the resolution increases, the number of coarse grid regions increases, or the number of multiscale basis functions to be retained increases, this part of the computational cost will significantly rise and gradually become the main computational burden in the entire framework.

Just as important as multiscale discretization is the efficient solution of the global linear system. For strongly heterogeneous and highly contrastive Darcy problems, even though the multiscale space has reduced the discretization scale to a certain extent, the algebraic system obtained by global assembly may still have unfavorable spectral properties, making the iterative solution process the main bottleneck in the overall computation. Therefore, the preconditioner is not an ancillary component in the subsequent implementation, but a core element in the numerical framework for this type of problem \cite{dohrmann2003preconditioner}. Specifically, the two-grid \cite{fu2024efficient,yang2019twogrid,fu2024adaptive,yang2022adaptive} or two-level \cite{ye2024robust,galvis2010domain,chung2013twolevel} method provides an effective and robust solution mechanism for Darcy-type systems in high-contrast media by combining coarse-space correction with local smoothing. For porous medium flow problems, existing studies have shown that an appropriately constructed coarse space can significantly improve the adaptability and convergence performance of iterative methods for highly heterogeneous systems. Thus, from the perspective of numerical methods, the multiscale space and the preconditioner correspond to the two interrelated and equally crucial aspects of approximate quality and solution efficiency.

In recent years, deep learning has provided a new perspective for solving partial differential equations (PDEs) \cite{li2026subsurface}. Neural operator methods \cite{lu2021learning,lu2022comprehensive,kovachki2023neural}, such as Fourier Neural Operators (FNO) \cite{li2020fourier}, directly learn the mapping from input to output, demonstrating strong expressiveness on parameterized PDEs like Darcy flow; convolutional networks like U-Net have advantages in multiscale feature extraction and local detail reconstruction \cite{ronneberger2015unet,liu2025learning}; physical-informed neural networks (PINN) \cite{rassi2019physics,li2024physics} embed the residual of physical equations into the loss function, making the network training subject to physical constraints. However, for strongly heterogeneous and highly contrastive Darcy problems, directly learning the final physical field still faces significant difficulties. On one hand, the local structures and sharp changes in multi-scale media are difficult to stably characterize through black-box mappings. On the other hand, residual-driven training is prone to optimization difficulties and accuracy degradation under high-contrast coefficient conditions. Thus, compared to directly applying learning methods to replace the entire numerical solution process, a more natural approach is to utilize learning methods to approximate the most expensive and structurally meaningful part of the numerical framework, such as the multi-scale basis functions in the GMsFEM framework \cite{chen2025prediction,li2025dual,choubineh2022innovative,choubineh2023deep}.

In this study, we proposed a hybrid solution framework for the Darcy equation. Instead of directly learning an end-to-end function space mapping, we focused on the computationally costly steps in the offline calculation, namely the offline multi-scale basis function calculation at the coarse grid scale. We developed a neural operator based on the frequency-domain attention mechanism as the global feature learning module, and used the improved U-Net based on the adaptive gating mechanism for further spatial refinement to obtain the multi-scale basis functions of the coarse grid. Subsequently, we will assemble based on the global sparse blocks and apply the two-grid preconditioner for iterative solution. We provided a rigorous convergence analysis for this hybrid framework, and experiments also demonstrated that our method significantly outperforms some well-known learning models in terms of inference efficiency and computational accuracy.

The organizational structure of the subsequent part of this article is as follows. \hyperref[sec:preliminaries]{Section~\ref{sec:preliminaries}} introduces the model problem and the construction of multiscale basis functions. \hyperref[sec:networks]{Section~\ref{sec:networks}} elaborates on the deep learning model and presents the construction method of the loss function. \hyperref[sec:method_preconditioner]{Section~\ref{sec:method_preconditioner}} introduces the two-grid preconditioner based on GMsFEN. \hyperref[sec:convergence]{Section~\ref{sec:convergence}} conducts convergence analysis for our hybrid framework. \hyperref[sec:experiments]{Section~\ref{sec:experiments}} provides detailed explanations on dataset generation, experimental environment, and demonstrates the model performance from multiple perspectives to showcase its superiority. In the \ref{appendix:ms_sol} and \ref{appendix:pre_sol}, we have provided detailed test cases based on different numbers of multiscale basis functions.

%% file: section2_preliminaries.tex
\section{Preliminaries}
\label{sec:preliminaries}
\begin{comment}
    Part 1. Model Problem: What we need to solve?
    Part 2. Construction of Multiscale Basis Function.
    Part 3. Two-Grid Preconditioner: Basic Definition.
\end{comment}
\subsection{Model Problem}
Consider the first-order problem of mixed form in the Lipschitz continuous domain $\Omega$:
\begin{equation}
    \begin{aligned}
        \kappa^{-1}v + \nabla p &= 0\quad \text{in}~\Omega, & (\text{Darcy's Law}) \\
        \nabla \cdot v &= f\quad \text{in}~\Omega, & (\text{Mass Conservation})
    \end{aligned}
\end{equation}
with non-homogeneous Neumann boundary condition $v \cdot n = g,~\text{on}~\partial\Omega$, where $\kappa$ is the given permeability field with high-contrast coefficient, $p$ and $v$ are the pressure and velocity fields, respectively. $f$ denotes the known source term, and $n$ is the outward unit vector of $\partial\Omega$.

Next we describe the multiscale finite element partition. Let $\mathcal{T}^H$ denote the non-overlapping conforming finite element partition of $\Omega$, and we call it coarse grid block. Then each coarse gird is partitioned into a connected union of fine grid blocks, where $H$ and $h$ are the mesh size of coarse and fine grids, respectively. Denote $\mathcal{E}^H := \bigcup_{i=1}^{N_e}\{E_i\}$ as the set of coarse edges, where $N_e$ is the number of coarse edges. Define the coarse neighborhood $w_i$ corresponding to the coarse edge $E_i$ as the union of all coarse-grid blocks having the edge $E_i$:
\begin{equation}
    w_i = \bigcup \{ K_j \in \mathcal{T}^H;~~E_i \in \partial K_j \}
\end{equation}
Similarly, we define the fine grid mesh as $t_i$, and $\Omega=\bigcup_{i=1}^{M_t}t_i$, where $M_t$ is the number of fine grid. Also $\mathcal{E}^h=\bigcup_{i=1}^{M_e}e_i$ is the set of fine edges. 

Define
\begin{equation}
    L^2(\Omega) := \left\{ v:~v\in\Omega~\text{and}~\int_{\Omega}v~dx<\infty \right\} \notag
\end{equation}
and
\begin{equation}
    H(\text{div},\Omega) := \left\{ v=(v_1,v_2)\in \left( L^2(\Omega) \right)^2,~\nabla \cdot v \in L^2(\Omega) \right\} \notag
\end{equation}

Then we define the multiscale finite element space
\begin{equation}
    \begin{aligned}
        V_h &= \left\{ 
        v_h\in V,~v_h|_t= 
        \begin{bmatrix}
            a_t \\ c_t  
        \end{bmatrix}
        +
        \left[ b_t,d_t \right]
        \begin{bmatrix}
            x_1 \\ x_2
        \end{bmatrix},~~
        a_t,b_t,c_t,d_t\in\mathbb{R}, t \in \mathcal{T}^h
        \right\} \\
        Q_h &= \left\{ q_h \in Q,~q_h~\text{is constant on each coarse element in }\mathcal{T}^h \right\}
    \end{aligned} \notag
\end{equation}
where $V=H(\text{div},\Omega)$ and $Q=L^2(\Omega)$.

For the coarse grids, define
\begin{equation}
    \begin{aligned}
        V_H :&
        = \bigoplus_{{\mathcal{E}^H}} \left\{  \Psi_i \right\},\quad \left\{ \Psi_i \right\}~\text{is the set of multiscale basis functions on}~E_i \\
        Q_H :&= \left\{ \text{piecewise constant on}~\mathcal{T}^H \right\} \notag
    \end{aligned}
\end{equation}
together with $V_H^0:=V_H \cap\left\{ v\in V_H,~v\cdot n=0~\text{on}~\partial\Omega \right\}$, i.e., $V_H^0=\bigoplus_{\mathcal{E}^h}\left\{ \Psi_i \right\}$.

Following the above definitions, the purpose of mixed GMsFEM is to find $\left( p_H,v_H \right)$ such that
\begin{equation}
    \begin{aligned}
        \int_{\Omega}\kappa^{-1}v_H\cdot w_H - \int_{\Omega}p_H\nabla\cdot w_H &= 0, & \forall~w_H\in V_H^0 \\
        -\int_{\Omega}q_H\nabla\cdot v_H &= -\int_\Omega fq_H, & \forall~q_H \in Q_H
    \end{aligned}
    \label{eq:coarse_weak}
\end{equation}
We call the solution that satisfies \eqref{eq:coarse_weak} multiscale solution or coarse-grid solution. Similarly we need to define the fine-grid solution as the reference to test the performance of multiscale solution:
\begin{equation}
    \begin{aligned}
        \int_\Omega \kappa^{-1}v_h\cdot w_h - \int_\Omega p_h\nabla\cdot w_h &= 0, & \forall~w_h\in V_h^0 \\
        -\int_\Omega q_h\nabla\cdot w_h &= -\int_\Omega fq_h, & \forall~q_h \in Q_h
    \end{aligned}
    \label{eq:fine_weak}
\end{equation}

Let $\left\{ \phi_1,\cdots,\phi_n \right\}$ and $\left\{ q_1,\cdots,q_m \right\}$ denote the basis set of $V_h$ and $Q_h$, respectively. Assume that $v_h=\sum_{i=1}^{n}v_i\phi_i$ and $p_h=\sum_{i=1}^m p_iq_i$, we can rewrite \eqref{eq:fine_weak} as the matrix representation
\begin{equation}
    \begin{bmatrix}
        M & B^T \\
        B & 0
    \end{bmatrix}
    \begin{bmatrix}
        v_h \\ p_h
    \end{bmatrix}
    =
    \begin{bmatrix}
        0 \\ F
    \end{bmatrix}
\end{equation}
where $M$ is a symmetric positive definite matrix with the entries $M_{ij}=\int_\Omega \kappa^{-1}\phi_i\phi_j$; $B$ is the approximation of the divergence operator with $B_{ij}=-\int_{\Omega}q_i \nabla\cdot \phi_j$; $F$ is the vector with $F_i = \int_\Omega fq_i$. By the velocity elimination technique, the trapezoidal rule to compute $M_{ij}$ can be transformed into a diagonal matrix, which can be easily inverted. Then one can obtain
\begin{equation}
    \left( BM^{-1}B^T \right)p_h = F
\end{equation}
where $BM^{-1}B^T$ is a symmetric and semipositive definite matrix.

\subsection{Construction of Multiscale Basis Functions for $Q_H$}
The first step to compute multiscale basis functions for pressure is to build the snapshot space, which refers to the local spatial domain or solution. In our problem setting, we construct the snapshot spaces by solving a series of local problems: for each $T_i$, find $(v_j^{(i)},p_j^{(i)})\in (V_h,Q_h)|_{T_i}$ such that
\begin{equation}
    \begin{aligned}
        \kappa^{-1}v_j^{(i)}+\nabla p_j^{(i)} &= 0, & \text{in}~T_i \\
        \nabla \cdot v_j^{(i)} &= 0, & \text{in}~T_i \\
        p_j^{(i)} &= \delta_j^{(i)}, & \text{on}~\partial T_i
    \end{aligned}
\end{equation}
Here the coarse edges can be represented as the union of fine grids ($\partial T_i = \bigcup_{j=1}^{J_i}e_j$), where $\delta_j^{(i)}$ is the piecewise constant function defined on $\partial T_i$ and satisfies
\begin{equation}
    p_j^{(i)}
    =
    \begin{cases}
        1, & \text{in}~e_i \\
        0, & \text{on the others edges of $\partial T_i$}
    \end{cases} \notag
\end{equation}

These Dirichlet boundary-value problems can be numerically solved using the lowest Raviart-Thomas element. Thus, the solution set of above questions on each coarse element can generate the snapshot space, which is defined as
\begin{equation}
    Q_{\text{snap}} := \text{span}\left\{ \Psi_j^{\text{snap}}, 1 \leq j \leq M_{\text{snap}},~M_{\text{snap}}=\sum_{i=1}^{N_t}J_i \right\} \notag
\end{equation}

Reorder the basis functions of snapshot space, one can obtain
\begin{equation}
    R_{\text{snap}}=\left[\Psi_1^{\text{snap}},\cdots,\Psi_{M_{\text{snap}}}^{\text{snap}} \right] \notag
\end{equation}
to project the snapshot space to fine-grid space.

Next step is to build the offline space. We further reduce the snapshot space. For each snapshot space $Q_{\text{snap}}^{(i)}$ of $T_i$, we solve local spectral problem
\begin{equation}
    A_{\text{snap}}^{(i)}\Psi_{\text{off}}^{(i)}=\lambda_{\text{off}}^{(i)}S_{\text{snap}}^{(i)}\Psi_{\text{snap}}^{(i)} \notag
\end{equation}
where
\begin{equation}
    \begin{aligned}
        A_{\text{snap}}^{(i)} &= a_i(\psi_m^{i,\text{snap}},\psi_n^{i,\text{snap}})=R_{\text{snap}}^TAR_\text{snap}\\
        S_{\text{snap}}^{(i)} &= s_i(\psi_m^{i,\text{snap}},\psi_n^{i,\text{snap}})=R_{\text{snap}}^TSR_{\text{snap}}
    \end{aligned} \notag
\end{equation}
, $a_i(p,w)=\sum_{e}\kappa [p][w],~s_i(p,w)=\int_{T_i}\kappa pw$, $[\cdot]$ is the jump and $e$ is is an interior fine edge in $T_i$; $A$ and $S$ are the stiffness and mass matrices of fine grid. Reorder the eigenvalues by
\begin{equation}
    \lambda_i^{(i)}<\cdots <\lambda_{J_i}^{(i)} \notag
\end{equation}
and choose the first $l_i$ eigenvalues with corresponding eigenvectors $Z_k^{(i)}=(Z_{kj}^{(i)})_{j=1}^{J_i}$. The multiscale basis functions of offline space can be constructed by the multiplication $\Psi_{k}^{i,\text{off}}=\sum_{j=1}^{J_i}Z_{kj}^{(i)}\Psi_j^{i,\text{snap}}$. Here we define $M_{\text{off}}=\sum_{i=1}^{N_t}l_i$, the offline space can be rewritten as
\begin{equation}
    Q_{\text{off}}=\text{span}\left\{ \Psi_k^{\text{off}},~1 \leq k \leq M_{\text{off}} \right\} \notag
\end{equation}

Similarly, we can formulate the project matrix
\begin{equation}
    R_{\text{off}}=\left[ \Psi_1^{\text{off}},\cdots,\Psi_{M_{\text{off}}}^{\text{off}} \right] \notag
\end{equation}

Therefore, the problem needed to be solved is to find $(p_H,v_H)$ such that
\begin{equation}
   \begin{bmatrix}
       M & B^TR_{\text{off}} \\
       R_{\text{off}}^TB & 0
   \end{bmatrix}
   \begin{bmatrix}
       v_H \\ p_H
   \end{bmatrix}
   =
   \begin{bmatrix}
       0 \\ R^T_{\text{off}}F
   \end{bmatrix}
\end{equation}
with the one without velocity term
\begin{equation}
    \left(R^T_{\text{off}}BM^{-1}B^TR_{\text{off}}\right)p_H=R_{\text{off}}F
\end{equation}

In practical applications, permeability fields exhibit characteristics such as high resolution, multi-scale, and strong heterogeneity. Even with the selection of an appropriate number of multi-scale basis functions, the offline computation phase still incurs costs. Based on this consideration, we propose using a hybrid neural network framework to replace this time-consuming stage. Once the basis functions are computed quickly, $R_{\text{off}}$ can be easily assembled.

%% file: section3_neural_operator.tex
\section{Attention-enhanced Hybrid Network to Construct Multiscale Basis Functions}
\label{sec:networks}
In this section, we discuss the network architecture used for quickly computing offline basis functions.

\begin{figure}[t]
    \centering
    \includegraphics[width=1.0\linewidth]{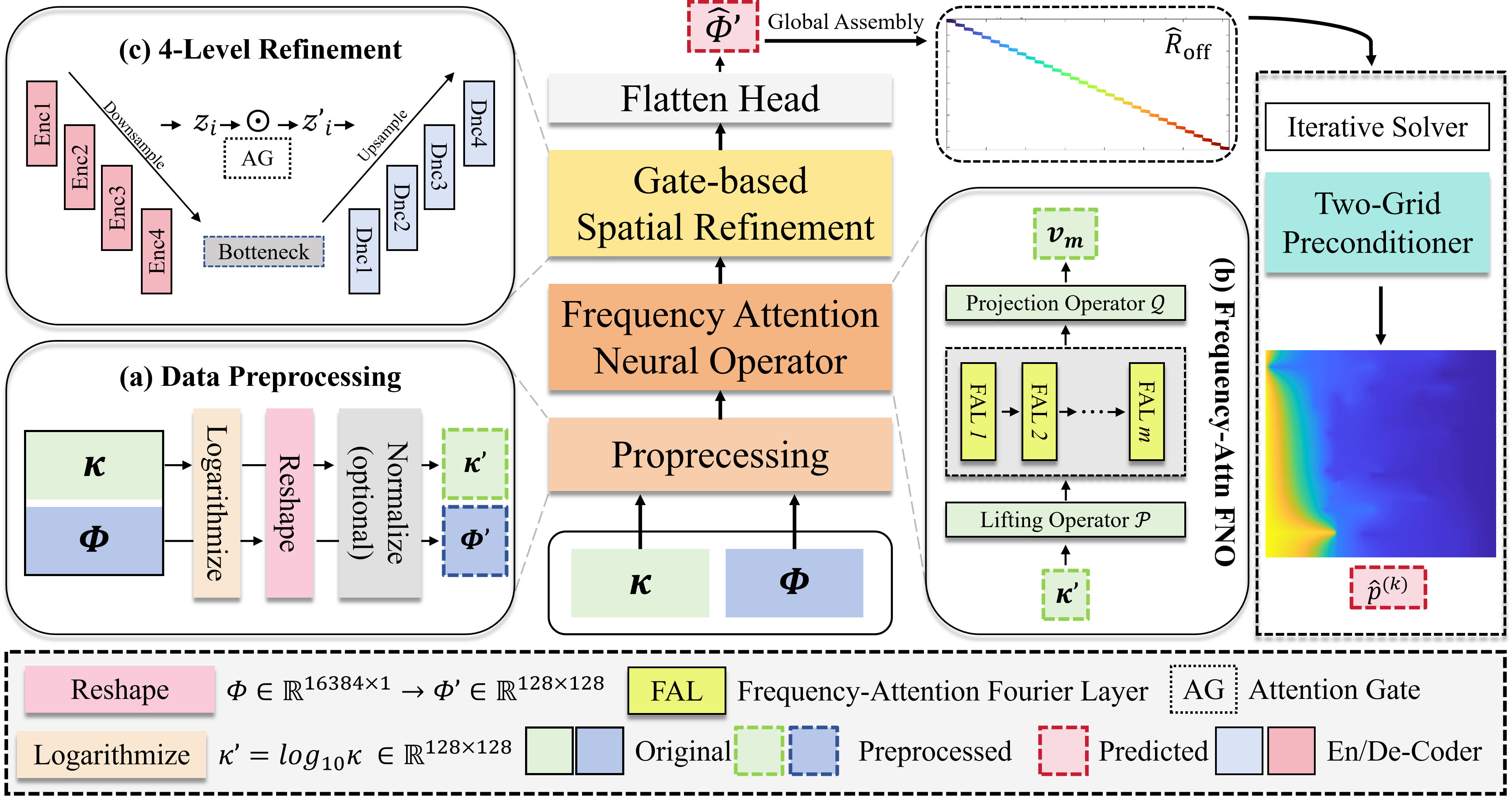}
    \caption{Schematic illustration of the overall technical pipeline of the proposed framework.}
    \label{fig:pipeline}
\end{figure}

\subsection{Network Architecture}

We developed a hybrid deep network for the prediction of multiscale basis functions. The core idea is to organically integrate the superior global operator learning capability of neural operators with the strong local detail recovery capability of spatial domain convolutional networks, forming a unified framework featuring global modeling in the frequency domain and progressive refinement in the spatial domain. Specifically, the model adopts a Fourier neural operator (FNO) incorporated with an attention mechanism to conduct global feature extraction and coarse prediction. Afterwards, an Attention U-Net at the backend acts as a spatial-domain refinement module to achieve progressive correction and detail enhancement for the coarse prediction outputs. Experimental results validate that this architecture enables effective learning of the nonlinear mapping from permeability fields to multiscale basis functions (see \hyperref[sec:experiments]{Section~\ref{sec:experiments}}).

The overall network can be abstracted as a composite mapping
\begin{equation}
    \mathcal{G}_{\Theta}:~\kappa \mapsto \Psi,\quad \Psi=\mathcal{R}_{\theta_r}\left( \mathcal{F}_{\theta_f}(\kappa) \right)
\end{equation}
where $\mathcal{F}_{\theta_f}$ denotes the fronted neural operator module, and $\mathcal{R}_{\theta_r}$ represents the backend spatial-domain refinement module.

To enhance the model's perception of spatial information in non-periodic priority regions, we explicitly introduce the coordinate encoding into the operator input. A normalized grid is defined as
\begin{equation}
    g(x,y):=\left( \xi(x),\eta(y) \right) \in \left[0,1\right]^2
\end{equation}
which is fed into the operator together with the permeability field. The input is then mapped to a high-dimensional space via a lifting operator:
\begin{equation}
    v_0=\mathcal{P}(\tilde{x})
\end{equation}
where $\mathcal{P}$ is the lifting operator (this step can be implemented by a linear fully-connected network).

In the frequency-domain propagation a 2-dimensional fast Fourier transform (2D-FFT) is applied to the input field:
\begin{equation}
    \hat{v}=\mathcal{F}(v_0)=\text{FFT}(v_0) \notag
\end{equation}

Instead of learning kernel operators across all frequencies, the model only retains a limited number of low-frequency modes to capture key long-range interactions. Let the retained frequency indices be $(\mu,\nu)\in \{1,\cdots,m_1 \}\times \{ 1,\cdots,m_2\}$, and the positive and negative frequency components undergo channel-wise mixing using independent complex weight tensors $W^{(+)}$ and $W^{(-)}$:
\begin{equation}
    \begin{aligned}
        \hat{y}^{(+)}_{b,o,\mu,\nu} &= \sum_{i=1}^{C}\hat{v}_{b,i,\mu,\nu}W^{(+)}_{i,o,\mu,\nu} \\
        \hat{y}^{(-)}_{b,o,\mu,\nu} &= \sum_{i=1}^{N}\hat{v}_{b,i,\mu,\nu}W^{(-)}_{i,o,\mu,\nu} 
    \end{aligned}\notag
\end{equation}

Essentially, we perform a complex linear transformation on each retained frequency and achieve a discrete approximation of the kernel integral along the channel dimension. Different from the standard FNO, we further introduce a frequency-wise attention mechanism. For each retained frequency point $(\mu,\nu)$, the network learns trainable weights $A^{(+)}_{o,\mu,\nu}$ and $A^{(-)}_{o,\mu,\nu}$ to adaptively adjust the importance different frequency components, yielding the frequency-domain output:
\begin{equation}
    \tilde{y}^{(+)}=A^{(+)}\hat{y}^{(+)},\quad \tilde{y}^{(-)}=A^{(-)}\hat{y}^{(-)} \notag
\end{equation}

\begin{remark}
For the normalization of attention weights, we consider two implementations of nonlinear activation functions. The first employs the sigmoid function: $A^{(\pm)}=\text{sigmoid}~(\alpha^{(\pm)})\in[0,1]^2$, under which the gating of each frequency component is independent of others. The second uses the softmax function: $A^{(\pm)}=\frac{\exp(\alpha^{(\pm)})}{\sum \exp(\alpha^{(\pm)})}$, which normalizes all retained frequencies jointly within the same channel, thereby forming competitive frequency allocation. In our numerical experiments, we adopt the first approach.
\end{remark}

After frequency-domain weighting, the model pads the retained modes back into the spectral tensor and transforms them back to the spatial domain via the inverse FFT (iFFT):
\begin{equation}
    y=\mathcal{F}^{-1}(\tilde{y})=i\text{FFT}(\tilde{y}) \notag
\end{equation}

To address the insufficient expression of local nonlinearity and fine-scale geometric details in pure frequency-domain linear propagation, each frequency-domain block is equipped with a parallel $1 \times 1$ spatial convolution branch $W_s$, which is fused with the frequency-domain branch:
\begin{equation}
    v_{l+1}=\sigma\left(\mathcal{K}_{l}(v_l)+W_{s,l}(v_l)\right)
\end{equation}
where $\mathcal{K}_l$ is the attention operator of the $l$-th layer, and $\sigma$ is the GELU activation.

The output of the frontend neural operator is a coarse prediction field that encapsulates global dependencies and dominant spectral information. However, for multiscale basis functions, modeling only global correlations is insufficient. Multiscale basis functions are typically influenced by multiple factors simultaneously (e.g., local medium jumps, boundary constraints, coarse mesh neighborhood geometry, and high-contrast channel connectivity). Therefore, we further perform multiscale hierarchical refinement in the spatial domain and design a 4-level UNet architecture (an encoder-decoder structure).

In the encoding stage, the double convolution module extracts local textures, edges, and block structures through successive local convolutions and nonlinear activations, while progressive downsampling expands the receptive field, enabling the network to further reorganize the coarse-scale responses provided by the neural operator in the spatial domain. The expanding path structure of the U-Net is inherently designed to balance contextual information and precise localization, which aligns well with the requirement of "global trend + local support structures" in basis function prediction.

In the decoding stage, the model follows the standard U-Net pipeline of upsampling–skip connection–convolution recovery, but instead of directly concatenating encoder features, it introduces an Attention Gate before each level of skip connection. Let the decoder gating feature be $g$ and the corresponding encoder skip feature be $x$; the gate first projects both to the same intermediate dimension:
\begin{equation}
    q_g=W_g * g,\quad q_x= W_x * x \notag
\end{equation}

Then the gating response is constructed:
\begin{equation}
    \psi=\sigma \left( W_{\psi} * \Phi(q_g+q_x) \right)
\end{equation}
where $\Phi$ represents a nonlinear activation, and $\sigma$ is the sigmoid function. Finally, the skip feature after gating selection is expressed as
\begin{equation}
    \tilde{x}=x \odot \psi
\end{equation}

This implies that local details from the encoder are not transmitted indiscriminately, but are spatially selectively enhanced according to the semantic requirements of the current decoding stage. For our problem, this mechanism is particularly crucial, as not all local high-frequency information is equally beneficial for basis function construction; the Attention Gate can suppress noise propagation from irrelevant regions and highlight detailed information related to the support regions of target basis functions, interface variations, and local anomalous structures. The core idea of Attention U-Net is precisely to enhance the effectiveness of skip connections through a gating mechanism and reduce the interference of irrelevant responses on the reconstruction results.

Therefore, the entire model can be interpreted as a two-stage collaborative approximation. In the first stage, neural operator learns the nonlocal operator mapping from the parameter field to the target field in function space, with the advantage of encoding long-range interactions, global trends, and cross-region coupling with low spectral mode complexity. In the second stage, Attention U-Net refines this coarse prediction in physical space, recovering boundaries, interfaces, and local fine-grained structures through multi-layer convolutions, context aggregation, and gated skip connections.

\subsection{Loss Function}
During the training phase, the network output is denoted as
\begin{equation}
    \hat{\Psi}=\mathcal{G}_{\Theta}(\kappa)(x)
\end{equation}
where $\kappa(x)$ denotes the permeability field, $\hat{\Psi}$ renotes the predicted multiscale basis functions, and $\Psi$ denotes the reference basis functions obtained via offline computation.

To simultaneously constrain the overall numerical accuracy and local spatial variation structure of the prediction results, this paper does not merely adopt the traditional mean squared error (MSE) loss, but constructs a joint loss function composed of field value reconstruction error and gradient consistency error. Its overall form is written as:
\begin{equation}
    \mathcal{L}_{\text{total}}=\mathcal{L}_{\text{data}}+\lambda_{\text{grad}}\mathcal{L}_{\text{grad}}
    \label{eq:total_loss}
\end{equation}
where $\lambda_{\text{grad}}>0$ is the weight of the gradient loss, used to balance the relative contributions of the field value fitting and local structure preservation.

First, the field value reconstruction term adopts the MSE form:
\begin{equation}
    \mathcal{L}_{\text{data}}=\frac{1}{N}\sum_{i=1}^N \left\| \hat{\Psi}^{(i)}-\Psi^{(i)} \right\|_2^2
    \label{eq:data_loss}
\end{equation}
Its role is to globally constrain the closeness between the predicted basis functions and the ground-truth basis functions in terms of amplitude, ensuring that the model learns the correct dominant distribution pattern and global numerical response.

However, for multiscale basis function prediction, relying solely on field value error is often insufficient to characterize their local geometric features. The reason is that multiscale basis functions require not only that the numerical amplitude approximates the ground-truth solution, but also that their spatial variation trends, local transition morphologies, and gradient distributions near interfaces can be reasonably recovered. Especially in high-contrast permeability fields, basis functions frequently exhibit strong directional variations and non-smooth transitions in local regions. If only point-wise error is minimized, the model may achieve a low global MSE but fail to accurately recover local boundaries, sharp transitions, or fine-scale support structures. Based on this consideration, this paper further introduces a gradient consistency loss to constrain the first-order discrete derivatives of the predicted and ground-truth fields in two spatial directions.

Define the first-order forward differences of the predicted field and reference field on the discrete grid as:
\begin{equation}
    \begin{aligned}
        D_x\hat{\Psi}_{i,j} &= \hat{\Psi}_{i+1,j}-\hat{\Psi}_{i,j},\quad D_y\hat{\Psi}_{i,j}=\hat{\Psi}_{i,j+1}-\hat{\Psi}_{i,j} \\
        D_x\Psi_{i,j} &= \Psi_{i+1,j}-\Psi_{i,j},\quad D_y \Psi_{i,j}=\Psi_{i,j+1}-\Psi_{i,j}
    \end{aligned}
    \notag
\end{equation}

Thus, the gradient consistency loss is defined as:
\begin{equation}
    \mathcal{L}_{\text{grad}}=\frac{1}{N}\sum_{i=1}^{N}\left(
    \left\| 
    D_x\hat{\Psi}^{(i)}-D_x\Psi^{(i)}
    \right\|_2^2
    +
    \left\|
    D_y\hat{\Psi}^{(i)}-D_y\Psi^{(i)}
    \right\|_2^2
    \right)
    \label{eq:grad_loss}
\end{equation}

Combining \eqref{eq:data_loss} and \eqref{eq:grad_loss}, the final training objective is
\begin{equation}
    \arg\min_{\Theta}~\mathcal{L}_{\text{data}}(\Psi,\hat{\Psi}_{\Theta})+\lambda_{\text{grad}}\mathcal{L}_{\text{grad}}(\Psi,\hat{\Psi}_{\Theta})
\end{equation}

Here, the first term ensures that the predicted basis functions approximate the ground-truth basis functions in terms of global amplitude and overall morphology, while the second term emphasizes the consistency of the local gradient field, directing the model to focus more on the recovery of boundaries, interfaces, and local fine-scale structures. With the addition of gradient loss, the model can better preserve the spatial structural features of basis functions, reduce the over-smoothing phenomenon that may occur when training with MSE alone, and thereby improve the usability and stability of the prediction results in subsequent numerical computation workflows.

%% file: section4_gmsfem_two_grid_preconditioner.tex
\section{GMsFEM-based Two-Grid Preconditioner}
\label{sec:method_preconditioner}

In this section, we illustrate the algorithm of two-grid preconditioner used in solving the pressure fields.

The preconditioner consists of two key components: smoother and coarse preconditioner. The smoother is used to remove high-frequency errors, and the coarse preconditioner is used to exchange global information. The step of computation can be viewed in \hyperref[alg:two_grid]{Algorithm~\ref{alg:two_grid}}.

\begin{remark}
    There are a lot of choices of the smoother, such as Jacobi iteration, Gauss-Seidel iteration, and the incomplete low upper (ILU) decomposition. In our experiment, we use ILU(0) as the smoother.
\end{remark}

\begin{remark}
    For the coarse preconditioner, we need to solve some coarse systems. The first way is to use the complete LU decomposition (robust but with huge memory cost) The second way is to use the iterative solver with ILU preconditioner (cheap but less robust).
\end{remark}

\begin{algorithm}[H]
    \caption{Two-grid preconditioner for solving $Ax = F$}
    \label{alg:two_grid}
    
    \KwIn{Initial guess $x^0$}
    \KwOut{Solution $x$}
    
    $x^1 \gets x^0 + S\bigl(F - Ax^0\bigr)$ \tcp*{Pre-smoothing, $S$ is a smoother}
    $r_1 \gets R\bigl(F - Ax^1\bigr)$ \tcp*{Compute coarse-grid residual}
    Solve $\bigl(RAR^T\bigr)x_c = r_1$ \tcp*{coarse-grid correction}
    $x^2 \gets x^1 + R^T x_c$ \tcp*{Project coarse-grid correction to fine grid}
    $x^3 \gets x^2 + S\bigl(F - Ax^2\bigr)$ \tcp*{Post-smoothing, $S$ is a smoother}
\end{algorithm}

\hyperref[fig:pipeline]{Figure~\ref{fig:pipeline}} illustrates the total workflow of our work, including data preprocessing (will be introduced in \hyperref[sec:experiments]{Section~\ref{sec:experiments}}), the structure of our hybrid network (\hyperref[sec:networks]{Section~\ref{sec:networks}}), and the two-grid algorithm.

%% file: section_convergence.tex
\section{Convergence Analysis}
\label{sec:convergence}

%%% 定义基函数误差
\begin{definition}[\textbf{Error of Multiscale Basis Functions}]
    Let $\psi_j$ and $\hat{\psi}_j$ denote the reference and predicted multiscale basis functions, then the total error can be defined as
    \begin{equation}
        \begin{aligned}
            \varepsilon^2_{\mathrm{basis}} &:= \sum_{j=1}^{M_t}
            \left\|
                \kappa^{1/2}(\psi_j-\hat{\psi}_j)
            \right\|_{L^2(\Omega)}^{2}, & \text{(Global)} \\ 
            \varepsilon_{i,l} &:= 
            \left\|
                \kappa^{1/2}(\psi_{i,l}-\hat{\psi}_{i,l})
            \right\|_{L^2(K_i)}, & \text{(Local)}
        \end{aligned}   
    \end{equation}
\end{definition}

%%% 定义能量范数和诱导算子范数
\begin{definition}[\textbf{Local Error Operator Norm}]
    For each coarse element $T_i$, let $E_i$ denote the local error propagation operator, the operator norm of $E_i$ is defined as
    \begin{equation}
        \left\| E_i \right\|_{l^2\rightarrow a_i} := \sup_{\alpha \neq 0}\frac{\|E_i\alpha\|_{a_i}}{\|\alpha\|_2}
    \end{equation}
    where $\| \cdot \|_{a_i}$ is the local energy norm $\|v\|_{a_i}=a_i(v,v)$.

    In this research, we assume that there exists $\varepsilon_{\mathrm{loc}}>0$ such that
    \begin{equation}
        \sum_{i=1}^{N_t}\left\| E_i \right\|_{l^2 \rightarrow a_i}^2 \leq \varepsilon_{\mathrm{loc}}^2
        \label{assump:error_operator}
    \end{equation}
\end{definition}

\begin{theorem}
    Let $R_{\mathrm{off}}$ and $\hat{R}_{\mathrm{off}}$ be the reference and predicted offline multiscale basis function matrix, respectively.
    There exists a constant $C_{\mathrm{asm}}$ such that
    \begin{equation}
        \left\| \Delta R \right\| := \left\| R_{\mathrm{off}}-\hat{R}_{\mathrm{off}} \right\| \lesssim
        \left(
        \sum_{i=1}^{M_t}\left\| E_i \right\|_{l^2\rightarrow a_i}^{2}
        \right)^{1/2} \lesssim \varepsilon_{\mathrm{loc}}
    \end{equation}
\end{theorem}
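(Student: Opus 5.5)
The plan exploits the block structure of $R_{\mathrm{off}}$. The offline basis functions are built element by element, as $\Psi_k^{i,\mathrm{off}}=\sum_{j}Z_{kj}^{(i)}\Psi_j^{i,\mathrm{snap}}$ on each coarse element $T_i$. So both $R_{\mathrm{off}}$ and $\hat R_{\mathrm{off}}$ are, after reordering columns, block matrices: the columns attached to $T_i$ are supported only on the fine degrees of freedom inside $T_i$. Hence $\Delta R=\sum_i P_i^T\Delta R_i$, where $P_i$ is the restriction to $T_i$ and $\Delta R_i$ is the local block of column differences $\psi_{i,l}-\hat\psi_{i,l}$, $l=1,\dots,l_i$. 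I would interpret $\Delta R_i$ as the local error propagation operator $E_i$: it maps a local coefficient vector $\alpha\in\mathbb{R}^{l_i}$ to the local field $E_i\alpha=\sum_l\alpha_l(\psi_{i,l}-\hat\psi_{i,l})$. This identification is the point where the definition of $E_i$ must be fixed, since the paper leaves it abstract.

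\textbf{Step 1 (localization).} For a global coefficient vector $\alpha=(\alpha^{(1)},\dots,\alpha^{(N_t)})$, the supports of the blocks are disjoint up to edges. Writing $\|\cdot\|_a$ for the global energy norm assembled from the $a_i$, we get
\begin{equation*}
\|\Delta R\,\alpha\|_{a}^2=\sum_i\|E_i\alpha^{(i)}\|_{a_i}^2\le\sum_i\|E_i\|_{l^2\to a_i}^2\|\alpha^{(i)}\|_2^2\le\Big(\max_i\|E_i\|^2_{l^2\to a_i}\Big)\|\alpha\|_2^2 .
\end{equation*}
Taking the supremum over $\alpha$ and bounding the maximum by the sum gives $\|\Delta R\|_{l^2\to a}\le(\sum_i\|E_i\|^2_{l^2\to a_i})^{1/2}$.

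The constant $C_{\mathrm{asm}}$ enters here. If $a_i$ contains jumps across fine edges that are shared between neighbouring elements, the identity becomes an inequality. Its constant is the maximal number of coarse elements meeting any fine edge (at most 2 in 2D), which is a finite-overlap, assembly-type constant. Passing from $\|\cdot\|_a$ to whatever norm $\|\Delta R\|$ denotes costs a further equivalence constant; I would also absorb that into $C_{\mathrm{asm}}$.

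\textbf{Step 2.} The second inequality follows immediately from the assumption \eqref{assump:error_operator}. The summation index $M_t$ in the statement should be read as $N_t$, or else we set $E_i=0$ for nonexistent indices.

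\textbf{Main obstacle.} The main difficulty is making Step 1 rigorous with respect to norms. The operator norm $\|E_i\|_{l^2\to a_i}$ is measured in the local jump energy $a_i$. For $\|\Delta R\|$ to be the plain matrix 2-norm, we need a lower bound $\|v\|_2\lesssim\|v\|_a$ on the range of $\Delta R$. That is a discrete Poincaré-type estimate, and for high-contrast $\kappa$ its constant may depend on $\kappa_{\min}$. I would therefore state the result in the energy-induced norm, where Step 1 is exact. Only afterwards would I invoke the equivalence, with a constant depending on $h$ and $\kappa_{\min}$, and hide that constant in $\lesssim$.
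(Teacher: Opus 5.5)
Your argument is correct in the norm you work in, and it takes a different and sharper route than the paper. The paper writes $\Delta R=\sum_i\mathcal{G}_i(E_i)$ and asserts without proof that $\|\mathcal{G}_i(E_i)\|\le C\|E_i\|_{l^2\to a_i}$, on the grounds that assembly is only index embedding. It then uses the triangle inequality to get $\|\Delta R\|\lesssim\sum_i\|E_i\|_{l^2\to a_i}$. Finally it applies Cauchy--Schwarz, which produces a factor $M_t^{1/2}$ that is absorbed into $\lesssim$ even though it grows with the number of blocks.

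You instead use the fact that the pressure offline basis functions of different coarse elements have disjoint supports. So $\Delta R$ is block diagonal after permutation, and the local contributions add orthogonally. This gives $\|\Delta R\|_{l^2\to a}\le\max_i\|E_i\|_{l^2\to a_i}$, with no block-count factor, and the stated $l^2$-sum bound follows a fortiori. The paper's argument never uses disjointness, so it would carry over verbatim to overlapping supports (e.g.\ velocity bases on the neighbourhoods $w_i$). It pays for that generality with the mesh-dependent $M_t^{1/2}$. Your version, with a finite-overlap constant where needed, gives a constant that is honestly independent of the mesh.

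One caution about your last step, which is the same step the paper hides in its unproved bound on $\mathcal{G}_i$. With $a_i(p,w)=\sum_e\kappa[p][w]$ taken over interior fine edges of $T_i$ only, $\|\cdot\|_{a_i}$ is merely a seminorm that vanishes on constants on $T_i$. Hence no discrete Poincar\'e inequality $\|v\|_2\lesssim\|v\|_a$ can hold on the range of $\Delta R$, whatever dependence on $h$ and $\kappa_{\min}$ you allow. For instance, take $\hat\psi_{i,l}=\psi_{i,l}+c$ on $T_i$, with all other bases on $T_i$ predicted exactly. Then $\|E_i\|_{l^2\to a_i}=0$ but $\Delta R\neq0$.

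So you should either keep the theorem in the $l^2\to a$ norm, as you propose, or replace $a_i$ by a genuine norm. Examples are $a_i+s_i$, or a zero-extension variant that also counts the jumps on $\partial T_i$. Only then does the equivalence constant you intend to hide in $\lesssim$ exist.
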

\begin{proof}
    Based on the global coupling method, the reference and prediction global offline matrices are obtained by sparse interpolation of local blocks, therefore we have
    \begin{equation}
        \Delta R = \sum_{i=1}^{M_t} \mathcal{G}_i(E_i)
    \end{equation}
    where $\mathcal{G}_i$ represents the assembly embedding operator from the $i$-th local block to the global matrix, $E_i$ is the local error operator. Since assembly only involves index embedding and sparse insertion, it does not involve adding new errors, but only the propagation of errors from local to global. Thus, there exists a constant $C>0$ such that
    \begin{equation}
        \left\| \mathcal{G}_i(E_i) \right\| \leq C\left\| E_i \right\|_{l^2\rightarrow a_i} \notag
    \end{equation}
    then we have the below relationship
    \begin{equation}
        \left\| \Delta R \right\| \leq \sum_{i=1}^{M_t} \left\| \mathcal{G}_i(E_i) \right\| \lesssim \sum_{i=1}^{M_t}\left\| E_i \right\|_{l^2 \rightarrow a_i}
    \end{equation}

    Applying the Cauchy-Schwarz inequality, we have
    \begin{equation}
        \sum_{i=1}^{M_t}\left\| E_i \right\|_{l^2 \rightarrow a_i} \leq M_t^{1/2} \left( \sum_{i=1}^{M_t} \left\|E_i \right\|_{l^2 \rightarrow a_i}^{2} \right)^{1/2} \lesssim \left( \sum_{i=1}^{M_t} \left\|E_i \right\|_{l^2 \rightarrow a_i}^{2} \right)^{1/2}
    \end{equation}

    Based on the assumption \eqref{assump:error_operator}, we can complete the proof.
\end{proof}

\begin{theorem}
    In the predicted offline space $\hat{R}_{\mathrm{off}}$, the Galerkin error is controlled by the optimized approximation error, i.e.,\
    \begin{equation}
        \left\| p_h - \hat{p}_H \right\|_a \leq \inf_{\chi \in \hat{Q}_{\mathrm{off}}}\left\| p_h - \chi \right\|_a
    \end{equation}
    where $p_H$ and $\hat{p}_H$ denote the Galerkin approximation on $R_{\mathrm{off}}$ and $\hat{R}_{\mathrm{off}}$, respectively.
    \label{theo:cea_estimate}
\end{theorem}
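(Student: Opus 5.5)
This is a Céa-type best-approximation result. The key ingredient is Galerkin orthogonality in the energy inner product induced by the pressure operator $A = BM^{-1}B^T$. The constant is exactly one because $a(\cdot,\cdot)$ is symmetric and we measure the error in the norm it induces.

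\textbf{Step 1: set up the problems.} Let $a(p,q) := q^T A p$ for fine-grid pressure vectors, with $\|p\|_a^2 = a(p,p)$. The fine solution $p_h$ satisfies $a(p_h,q) = (F,q)$ for all $q \in Q_h$. The predicted coarse solution $\hat p_H$ is defined through the reduced system $\hat R_{\mathrm{off}}^T A \hat R_{\mathrm{off}}\, \hat c = \hat R_{\mathrm{off}}^T F$. Writing $\hat p_H = \hat R_{\mathrm{off}} \hat c \in \hat Q_{\mathrm{off}} = \operatorname{range}(\hat R_{\mathrm{off}})$, this means $a(\hat p_H, \chi) = (F,\chi)$ for every $\chi \in \hat Q_{\mathrm{off}}$.

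\textbf{Step 2: Galerkin orthogonality.} Since $\hat Q_{\mathrm{off}} \subset Q_h$, the fine equation may be tested with $\chi \in \hat Q_{\mathrm{off}}$. Subtracting the coarse equation gives
\[
a(p_h - \hat p_H, \chi) = 0 \quad \text{for all } \chi \in \hat Q_{\mathrm{off}}.
\]
The inclusion holds because the predicted basis vectors are fine-grid vectors, which is exactly how $\hat R_{\mathrm{off}}$ is assembled.

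\textbf{Step 3: best approximation.} Take any $\chi \in \hat Q_{\mathrm{off}}$. Then $\hat p_H - \chi \in \hat Q_{\mathrm{off}}$, so by orthogonality
\[
\|p_h - \chi\|_a^2 = \|p_h - \hat p_H\|_a^2 + \|\hat p_H - \chi\|_a^2 \ge \|p_h - \hat p_H\|_a^2 .
\]
This is a Pythagorean identity. Taking the infimum over $\chi$ gives the claim with constant one.

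\textbf{Main obstacle: well-posedness and the semi-definiteness of $A$.} The matrix $BM^{-1}B^T$ is only positive semi-definite, since pressure is determined up to a constant under pure Neumann data. So I would first work on the quotient by constants, or impose zero mean, where $\|\cdot\|_a$ is a genuine norm. I also need $\hat R_{\mathrm{off}}^T A \hat R_{\mathrm{off}}$ to be invertible on this quotient, i.e. the predicted basis columns must be linearly independent modulo constants. I would state this as a mild assumption. It can be justified by perturbation from the reference space using the preceding Theorem: $\|\Delta R\| \lesssim \varepsilon_{\mathrm{loc}}$ is small, so for sufficiently small $\varepsilon_{\mathrm{loc}}$ the smallest singular value of $\hat R_{\mathrm{off}}$ stays away from zero.

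Even without uniqueness, the Pythagorean argument still applies to any solution $\hat p_H$ of the reduced system, provided the compatibility condition $(F,\mathbf 1) = 0$ holds. That condition guarantees the reduced system is consistent.
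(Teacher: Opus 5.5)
Your proof is correct and follows the paper's argument: Galerkin orthogonality from $\hat Q_{\mathrm{off}}\subset Q_h$, then a best-approximation step. You use the Pythagorean identity, while the paper writes $a(p_h-\hat p_H,p_h-\hat p_H)=a(p_h-\hat p_H,p_h-\chi)$, applies Cauchy--Schwarz and cancels; your remarks on the semi-definiteness of $BM^{-1}B^T$ and the compatibility condition $(F,\mathbf{1})=0$ are extra care the paper omits, and the inequality holds even when $\|\cdot\|_a$ is only a seminorm.
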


\begin{proof}
    According to the Galerkin orthogonality, one can obtain
    \begin{equation}
        \begin{aligned}
            a(p_h,q) &= (f,q), & \forall~q \in Q_h \\
            a(\hat{p}_H, \chi) &= (f,\chi), & \forall \chi \in \hat{Q}_{\mathrm{off}} \\
            a(p_h-\hat{p}_H, \chi) &= 0, & \forall \chi \in \hat{Q}_{\mathrm{off}}
        \end{aligned}
    \end{equation}

    Hence, for all $\chi \in \hat{Q}_{\mathrm{off}}$, we have
    \begin{equation}
        a(p_h-\hat{p}_H, p_h-\hat{p}_H) = a(p_h-\hat{p}_H,p_h-\chi)
    \end{equation}

    Applying the Cauchy-Schwarz inequality, one has
    \begin{equation}
        \left\| p_h-\hat{p}_H \right\|_a^2 \leq \left\| p_h-\hat{p}_H \right\|_a \left\| p_h - \chi \right\|_a
    \end{equation}

    If the LHS is 0, then the conclusion is obviously true. Otherwise, canceling $ \left\| p_h-\hat{p}_H \right\|_a$ from both sides and take the lower bound of all $\chi$, one can obtain the conclusion, too.
\end{proof}

Next we will show that the approximation property of $\hat{Q}_{\mathrm{off}}$ is controlled by the true approximation and basis function errors. Before the proof, we need to introduce the approximation property of true offline space.

\begin{assumption}
    We assume that $\hat{Q}_{\mathrm{off}}$ has the same dimension with $Q_{\mathrm{off}}$, and there exists a constant $0 \leq c_R \leq C_R$ such that
    \begin{equation}
        c_R\|\alpha\|_2^2 \leq \left\| \kappa^{1/2}\sum_{j=1}^{N}\alpha_j\hat{\psi}_j \right\|_{L^2(\Omega)}^{2} \leq C_R\|\alpha\|_2^2,\quad \forall~\alpha \in \mathbb{R}^{N}
    \end{equation}
    \label{assump:dimension_consist}
\end{assumption}

\begin{lemma}[\textbf{Approximation Property of $Q_{\mathrm{off}}$} \cite{fu2024efficient}]
    There exists a constant $C_1>1$ independent of $\kappa$, $h$, and $H$ such that
    \begin{equation}
        \inf_{\chi \in Q_{\mathrm{off}}} \left\| \kappa^{1/2} \left( q-\chi \right) \right\|_{L^2(\Omega)}^2 \leq C_1\lambda_A^{-1}a(q,q),\quad \forall~q \in Q_h
        \label{eq:approximation_true}
    \end{equation}
    where $\lambda_A$ is the largest eigenvalue of $A=BM^{-1}B^T$.
    \label{lem:approx_property}
\end{lemma}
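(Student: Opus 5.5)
The plan is to prove the lemma in the standard GMsFEM way, as in \cite{fu2024efficient}: a local spectral projection estimate on each coarse element, summed over $\mathcal{T}^H$, and then converted into the global form with $\lambda_A$. The lemma is stated for the reference space $Q_{\mathrm{off}}$, so the predicted basis functions play no role here.

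\textbf{Local step.} For each coarse element $T_i$, define a candidate $\chi$ elementwise. Let $\Pi_i q$ be the $s_i$-orthogonal projection of $q|_{T_i}$ onto the span of the first $l_i$ eigenfunctions of the local spectral problem $a_i(\psi,w)=\lambda\, s_i(\psi,w)$.

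Since $q\in Q_h$ is only piecewise constant, I would first pass it to the snapshot space. Replace $q|_{T_i}$ by its discrete $\kappa$-harmonic extension from the boundary values on $\partial T_i$. Use the fact that the snapshot functions span all such extensions, and that the extension is $a_i$-minimal. Together these give $a_i(\text{ext}, \text{ext}) \le a_i(q,q)$.

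The interior (bubble) part of $q$ needs separate treatment. The natural tool is a local Poincaré-type bound $s_i(q_0,q_0)\lesssim \lambda_{\min,i}^{-1} a_i(q_0,q_0)$ for functions vanishing on $\partial T_i$.

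The spectral expansion then gives the standard estimate
\begin{equation}
s_i(q-\Pi_i q,\, q-\Pi_i q) \le \frac{1}{\lambda^{(i)}_{l_i+1}}\, a_i(q,q). \notag
\end{equation}

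\textbf{Global step.} Set $\chi|_{T_i}=\Pi_i q$ and sum over $i$. This yields
\begin{equation}
\|\kappa^{1/2}(q-\chi)\|_{L^2(\Omega)}^2 \le \Big(\min_i \lambda^{(i)}_{l_i+1}\Big)^{-1}\sum_i a_i(q,q). \notag
\end{equation}
Two facts finish this step. First, $\sum_i a_i(q,q)\le C\,a(q,q)$, because the local jump forms count only interior fine edges of each $T_i$. Second, $a(q,q)=q^T B M^{-1}B^T q$ is the global jump form with harmonic-averaged weights coming from the diagonal (trapezoidal) $M$. Since every quantity here is pointwise, the constants depend only on mesh shape regularity, not on $\kappa$, $h$ or $H$.

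\textbf{Main obstacle.} The hard part is replacing $\big(\min_i\lambda^{(i)}_{l_i+1}\big)^{-1}$ by $C_1\lambda_A^{-1}$. I would first try to reproduce the argument of \cite{fu2024efficient}: the number of retained modes $l_i$ is chosen so that the first discarded local eigenvalue $\lambda^{(i)}_{l_i+1}$ is comparable to the fine-scale spectral scale. This step must use the definition of $A$ together with a local-to-global comparison of Rayleigh quotients, via Gershgorin-type bounds for the diagonal-mass $BM^{-1}B^T$ relative to $S$. If the argument goes through only under an eigenvalue-threshold condition on the $l_i$, I would state that condition explicitly, since it is exactly what makes $C_1$ contrast-independent.
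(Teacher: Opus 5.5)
The paper gives no proof of this lemma; it imports it from \cite{fu2024efficient}. So the question is whether your sketch closes on its own. It does not, and neither break is a technicality.

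The first break is the bubble component. Here $Q_{\mathrm{off}}\subset Q_{\mathrm{snap}}$, and on each $T_i$ the snapshot space contains only discrete $\kappa$-harmonic functions: $J_i=O(H/h)$ of them, against $O((H/h)^2)$ fine cells. So $q_0=q-\mathrm{ext}(q)$ is not approximated at all, whatever $l_i$ is. Your estimate $s_i(q-\Pi_i q,q-\Pi_i q)\le(\lambda^{(i)}_{l_i+1})^{-1}a_i(q,q)$ holds only when $q|_{T_i}$ lies in the snapshot space, and your global display silently drops $\sum_i s_i(q_0,q_0)$. The Poincar\'e bound you intend for that term uses the lowest Dirichlet eigenvalue $\mu_{1,i}$ of the pencil $(a_i,s_i)$. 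This eigenvalue is of coarse size $H^{-2}$, not fine size $h^{-2}$. It also degenerates like $\kappa_{\min}/\kappa_{\max}$ when $T_i$ contains an isolated high-permeability inclusion (take $q_0=1$ on the inclusion, decaying to $0$ in the background). That contrast loss is an artifact of discarding $q_0$ entirely, since constants would capture most of it. The $H$-dependence, however, is intrinsic. For $\kappa\equiv 1$ and $q$ a smooth bump supported in one $T_i$, the $L^2$ distance from $q$ to discrete harmonic functions is a fixed fraction of $\|q\|_{L^2}$. Hence the optimal constant grows like $(H/h)^2$ once $\lambda_A$ is normalized to the fine scale, no matter how many modes are kept.

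The second break is your ``main obstacle''. No Gershgorin or threshold argument can remove it, because the inequality as literally stated is inconsistent under rescaling of $\kappa$. Replace $\kappa$ by $c\kappa$ with a constant $c>0$:
\begin{itemize}
\item The snapshot problems and $(a_i,s_i)$ are homogeneous in $\kappa$, so $Q_{\mathrm{off}}$ and every $\lambda^{(i)}$ are unchanged.
\item $M$ becomes $c^{-1}M$, so $A$, $a(q,q)$ and $\lambda_A$ all scale by $c$, and $\lambda_A^{-1}a(q,q)$ is invariant.
\item The left-hand side is multiplied by $c$.
\end{itemize}
So the optimal $C_1$ is proportional to the scale of $\kappa$. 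No bound $(\min_i\lambda^{(i)}_{l_i+1})^{-1}\le C_1\lambda_A^{-1}$ with $\kappa$-independent $C_1$ can hold when $\lambda_A$ is the Euclidean top eigenvalue of $BM^{-1}B^T$.

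Your projection-and-sum argument does prove a corrected statement, under four changes:
\begin{itemize}
\item Let $\lambda_A$ be the largest eigenvalue of $A$ relative to the $\kappa$-weighted mass $S$. Gershgorin with harmonic-mean transmissibilities bounds it by $Ch^{-2}$ independently of the contrast.
\item Pose the local spectral problems on all of $Q_h(T_i)$ rather than on harmonic snapshots, so bubbles are covered.
\item Take $a_i$ to be the restriction of $a$ to fine edges interior to $T_i$, so that $\sum_i a_i(q,q)\le a(q,q)$.
\item Assume explicitly that $\lambda^{(i)}_{l_i+1}\ge\lambda_A/C_1$.
\end{itemize}
Then $\|\kappa^{1/2}(q-\chi)\|_{L^2(\Omega)}^2\le\sum_i(\lambda^{(i)}_{l_i+1})^{-1}a_i(q,q)\le C_1\lambda_A^{-1}a(q,q)$. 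But $C_1$ is then independent of $\kappa$, $h$, $H$ only by hypothesis, and the price is that $l_i$ grows like $(H/h)^2$ and with the number of high-contrast features in $T_i$. For the paper's snapshot-based $Q_{\mathrm{off}}$, the lemma as stated is not provable, and your write-up should say so rather than leave it as an unresolved obstacle.
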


Our objective is to prove that this property is also true for the predicted space, i.e., \eqref{eq:approximation_true} is also true for $\hat{\chi} \in \hat{Q}_{\mathrm{off}}$.

\begin{theorem}
    \label{theo:convergence_two_grid}
    There exists a constant $\hat{C}_1>0$ such that
    \begin{equation}
        \inf_{\hat{\chi} \in \hat{Q}_{\mathrm{off}}}
        \left\| \kappa^{1/2}\left(q-\hat{\chi} \right) \right\|_{L^2(\Omega)}^2 \leq \hat{C}_1\lambda_A^{-1}a(q,q),\quad \forall~q \in Q_h
    \end{equation}
    and 
    \begin{equation}
        \hat{C}_1 =C_1 + 2CC_1^{1/2}\lambda_A^{1/2}\varepsilon_{\mathrm{basis}}+C^2\lambda_A\varepsilon_{\mathrm{basis}}^2
    \end{equation}
\end{theorem}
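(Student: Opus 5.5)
The constant $\hat{C}_1 = (C_1^{1/2} + C\lambda_A^{1/2}\varepsilon_{\mathrm{basis}})^2$ is a perfect square. So the plan is to prove the square-root (norm) version of the inequality by the triangle inequality and then square it.

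Fix $q \in Q_h$. By Lemma~\ref{lem:approx_property}, and since $Q_{\mathrm{off}}$ is finite dimensional so the infimum is attained, we pick $\chi^\ast = \sum_j \alpha_j \psi_j \in Q_{\mathrm{off}}$ with
\[
\|\kappa^{1/2}(q-\chi^\ast)\|_{L^2(\Omega)} \le C_1^{1/2}\lambda_A^{-1/2}a(q,q)^{1/2}.
\]
We then define the candidate $\hat{\chi} := \sum_j \alpha_j \hat{\psi}_j \in \hat{Q}_{\mathrm{off}}$, with the same coefficient vector $\alpha$. This is legitimate by Assumption~\ref{assump:dimension_consist}, since $\hat{Q}_{\mathrm{off}}$ is indexed by the same basis set. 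The triangle inequality gives
\[
\|\kappa^{1/2}(q-\hat{\chi})\| \le \|\kappa^{1/2}(q-\chi^\ast)\| + \Big\|\kappa^{1/2}\sum_j \alpha_j(\psi_j-\hat{\psi}_j)\Big\|.
\]
By Cauchy--Schwarz over $j$, the second term is at most $\|\alpha\|_2\,\varepsilon_{\mathrm{basis}}$.

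The key step, and the expected obstacle, is to bound $\|\alpha\|_2 \le C\,a(q,q)^{1/2}$, with $C$ the constant appearing in $\hat{C}_1$. The natural route is a stability (Riesz) bound for the reference basis, analogous to the lower bound in Assumption~\ref{assump:dimension_consist}, namely $c\|\alpha\|_2^2 \le \|\kappa^{1/2}\chi^\ast\|^2$. Combined with
\[
\|\kappa^{1/2}\chi^\ast\| \le \|\kappa^{1/2}q\| + \|\kappa^{1/2}(q-\chi^\ast)\|,
\]
this leaves the task of controlling $\|\kappa^{1/2}q\|$ by $a(q,q)^{1/2}$. That requires either working modulo constants, via a Poincaré-type inequality for $A$ on the relevant complement, or reading $a(q,q)$ as the norm in which the basis coefficients are measured. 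Since the paper's constants are not tracked carefully, I would state this as the step where $C$ is defined. Concretely, I would take $C$ so that $\|\alpha\|_2 \le C\,a(q,q)^{1/2}$ holds for the near-optimal coefficients, and note that it depends on the Riesz constant $c_R$ and the spectral bounds of $A$. If the cleaner route fails, an alternative is to choose $\chi^\ast$ as the $\kappa$-weighted $L^2$ projection onto $Q_{\mathrm{off}}$, so that $\|\kappa^{1/2}\chi^\ast\| \le \|\kappa^{1/2}q\|$ automatically.

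With this bound in hand, the two terms combine to
\[
\|\kappa^{1/2}(q-\hat{\chi})\| \le \big(C_1^{1/2}\lambda_A^{-1/2} + C\varepsilon_{\mathrm{basis}}\big)a(q,q)^{1/2} = \lambda_A^{-1/2}\big(C_1^{1/2} + C\lambda_A^{1/2}\varepsilon_{\mathrm{basis}}\big)a(q,q)^{1/2}.
\]
Squaring this and expanding $(C_1^{1/2} + C\lambda_A^{1/2}\varepsilon_{\mathrm{basis}})^2 = C_1 + 2CC_1^{1/2}\lambda_A^{1/2}\varepsilon_{\mathrm{basis}} + C^2\lambda_A\varepsilon_{\mathrm{basis}}^2$ yields exactly $\hat{C}_1\lambda_A^{-1}a(q,q)$. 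Finally, we take the infimum over $\hat{Q}_{\mathrm{off}}$, which is bounded above by the value at the constructed $\hat{\chi}$.
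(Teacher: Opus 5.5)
Your argument is, step for step, the paper's proof. It uses the same minimizer $\chi^*$ from Lemma~\ref{lem:approx_property}, the same coefficient transfer $\hat\chi=\sum_j\alpha_j\hat\psi_j$, the triangle inequality, Cauchy--Schwarz giving $\|\alpha\|_2\,\varepsilon_{\mathrm{basis}}$, and squaring to the perfect square $(C_1^{1/2}+C\lambda_A^{1/2}\varepsilon_{\mathrm{basis}})^2=\hat C_1$. Your expansion is the correct one; the paper's final display has a slip, $2C_1C$ for $2C_1^{1/2}C$. The paper closes the coefficient bound only by asserting that $\|p\|_2$ is controlled by $\|\kappa^{1/2}\hat\chi\|_{L^2(\Omega)}$ and that $\hat\chi$ ``can be controlled by the energy of $q$''. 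So you have located exactly the step it leaves unproved.

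Your fallback of simply choosing $C$ so that $\|\alpha\|_2\le C\,a(q,q)^{1/2}$ does not work, because that inequality is false for every finite $C$. Since $V_h^0$ carries zero normal flux, $B^T\mathbf{1}=0$, so $A=BM^{-1}B^T$ annihilates constants; the paper itself notes $A$ is only semidefinite. For $q\equiv 1$ you get $a(q,q)=0$, the lemma forces $\chi^*=q$, and $\alpha\neq 0$. For the same reason, neither a Riesz bound on the reference basis nor taking $\chi^*=Pq$ rescues the step. Here $Pq$ is the $\kappa$-weighted $L^2$ projection onto $Q_{\mathrm{off}}$, which is the unique minimizer anyway. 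Both routes need $\|\kappa^{1/2}q\|_{L^2(\Omega)}\lesssim a(q,q)^{1/2}$, and that fails on constants.

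The missing idea is that constants are reproduced exactly in both spaces. On each coarse block the first basis function is piecewise constant and is not learned, so $\hat\psi_{i,1}=\psi_{i,1}$.
\begin{itemize}
\item With $\chi^*=Pq$, apply Cauchy--Schwarz only over the learned indices. This gives $\|\kappa^{1/2}(\chi^*-\hat\chi)\|_{L^2(\Omega)}\le\|\alpha_{\mathrm{lrn}}(q)\|_2\,\varepsilon_{\mathrm{basis}}$, where $\alpha_{\mathrm{lrn}}(q)$ collects the coefficients of $Pq$ on the learned functions.
\item The map $q\mapsto\alpha_{\mathrm{lrn}}(q)$ is linear. Constants lie in $Q_{\mathrm{off}}$ and are represented by the first basis functions alone, so this map vanishes on constants, i.e.\ on $\ker A$.
\item Since $A$ is positive semidefinite, $a(q,q)=0$ forces $q\in\ker A$. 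Because $Q_h$ is finite dimensional, $C:=\sup_{a(q,q)\neq 0}\|\alpha_{\mathrm{lrn}}(q)\|_2/a(q,q)^{1/2}$ is finite, and $\|\alpha_{\mathrm{lrn}}(q)\|_2\le C\,a(q,q)^{1/2}$ holds for all $q\in Q_h$.
\end{itemize}
From there your argument goes through verbatim and yields exactly the stated $\hat C_1$. The Riesz bounds of Assumption~\ref{assump:dimension_consist} are not even needed, only linear independence of the reference basis. Be aware, though, that this $C$ is in general not independent of $\kappa$, $h$ and $H$, unlike $C_1$. The resulting $\hat C_1$ therefore does not inherit the contrast robustness of Lemma~\ref{lem:approx_property}.
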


\begin{proof}
    According to \hyperref[lem:approx_property]{Lemma~\ref{lem:approx_property}}, for all $q \in Q_h$, there exists a $\chi^* \in Q_{\mathrm{off}}$ that satisfies
    \begin{equation}
        \left\| \kappa^{1/2}(q-\chi^*)
        \right\|_{L^2(\Omega)}^2
        \leq
        C_1\lambda_A^{-1}a(q,q)
    \end{equation}

    Let $\hat{\chi} \in \hat{Q}_{\mathrm{off}}$, rewrite $\chi^*$ and $\hat{\chi}$ as the linear combination of basis functions:
    \begin{equation}
        \chi^*=\sum_{j=1}^{N}p_j\psi_j,\quad \hat{\chi}=\sum_{j=1}^{N}p_j\hat{\psi}_j \notag
    \end{equation}
    Split $q-\hat{\chi}$ and take the L2 norm:
    \begin{equation}
        \begin{aligned}
            \left\| \kappa^{1/2}(q-\hat{\chi}) \right\|_{L^2(\Omega)} &= \left\| \kappa^{1/2}\left( (q-\chi^*)+(\chi^*-\hat{\chi}) \right) \right\|_{L^2(\Omega)} \\
            &\leq \underbrace{\left\| \kappa^{1/2}(q-\chi^*) \right\|_{L^2(\Omega)}}_{\mathrm{Term~1}} + 
            \underbrace{\left\| \kappa^{1/2}(\chi^*-\hat{\chi}) \right\|_{L^2(\Omega)}}_{\mathrm{Term~2}} & \text{(Triangle Inequality)}
        \end{aligned}
    \end{equation}

    For Term 2,
    \begin{equation}
        \begin{aligned}
            \left\| \kappa^{1/2}(\chi^*-\hat{\chi}) \right\|_{L^2(\Omega)} &= 
            \left\| \kappa^{1/2}\left( \sum_{j=1}^{N}p_j(\psi_j-\hat{\psi}_j) \right) \right\|_{L^2(\Omega)} \\ 
            &\leq
            \left\| p \right\|_2 \left( \sum_{j=1}^{N}\left\| \kappa^{1/2}(\psi_j-\hat{\psi}_j)\right\|_{L^2(\Omega)}^{2} \right)^{1/2} & \text{(Cauchy-Schwarz)} \\ 
            &= \left\| p \right\|_2 \varepsilon_{\mathrm{basis}}
        \end{aligned}
    \end{equation}

    By \hyperref[assump:dimension_consist]{Assumption~\ref{assump:dimension_consist}}, the coefficient vector $\|p\|_2$ is controlled by $\|\kappa^{1/2}\hat{\chi}\|_{L^2(\Omega)}$, and $\hat{\chi}$ can be controlled by the energy of $q$, thus we have
    \begin{equation}
        \| p \|_2 \leq Ca(q,q)^{1/2},~C~\text{is a constant}
    \end{equation}
    and
    \begin{equation}
        \left\| \kappa^{1/2}(q-\chi^*) \right\|_{L^2(\Omega)} \lesssim \varepsilon_{\mathrm{basis}}a(q,q)^{1/2}
        \label{eq:approx_term1}
    \end{equation}

    The Term 1 can be easily proved using \hyperref[lem:approx_property]{Lemma~\ref{lem:approx_property}}:
    \begin{equation}
        \left\| \kappa^{1/2}(q-\chi^*) \right\|_{L^2(\Omega)}
        \leq
        C_1^{1/2}\lambda_A^{-1/2}a(q,q)^{1/2}
        \label{eq:approx_term2}
    \end{equation}

    Combining \eqref{eq:approx_term1} and \eqref{eq:approx_term2}, one can obtain
    \begin{equation}
        \left\|
            \kappa^{1/2}(q-\hat{\chi})
        \right\|_{L^2(\Omega)}
        \leq
        C_1^{1/2}\lambda_A^{-1/2}a(q,q)^{1/2}+C\varepsilon_{\mathrm{basis}}a(q,q)^{1/2}
    \end{equation}  

    Square the terms on both sides of the equation:
    \begin{equation}
    \begin{aligned}
        \left\| \kappa^{1/2}(q-\chi^*) \right\|_{L^2(\Omega)}^2 &\leq \left( C_1\lambda_A^{-1}+2C_1C\lambda_A^{-1/2}\varepsilon_{\mathrm{basis}}+C^2\varepsilon_{\mathrm{basis}}^2\right)a(q,q) \\
        &=
        \left( C_1 + 2C1C\lambda_A^{1/2}\varepsilon_{\mathrm{basis}}+C^2\lambda_A\varepsilon_{\mathrm{basis}}^2 \right)\lambda_A^{-1}a(q,q) \\
        &=
        \hat{C}_1\lambda_A^{-1}a(q,q)
    \end{aligned}
    \end{equation}

    In particular, if $\varepsilon_{\mathrm{basis}}$ is sufficiently small, the linear term can be absorbed into the perturbation constant, and one obtains the simplified estimate
    \begin{equation}
        \hat{C}_1 \leq C_1 + C_{\mathrm{pert}}\varepsilon_{\mathrm{basis}}^2
    \end{equation}
\end{proof}

Next, we discuss the error propagation operator. Using the definition below:
\begin{definition}[\textbf{Error Propagation Operators}]
    Let $A=H-N$, where $H=(L+D)D^{-1}(L+D)^{T}$, $N$ is the rest matrix, D is the diagonal matrix, and $L$ is a strictly lower triangular part of A. Then define $K_1=I-H^{-1}A$ as the error propagation part of smoother. The error propagation operator of two-grid algorithm is
    \begin{equation}
        E_1 = K_1(I-P_0+E_0P_0)K_1
    \end{equation}
    where $E_0$ is the error propagation operator at the coarse-grid level, $P_0$ is the interpolation operator from $Q_h$ to $Q_{\mathrm{off}}$. Here we use the hat symbol to denote the ones on the predicted spaces. Define the norm
    \begin{equation}
        \begin{aligned}
            \| E_1 \|_A &:= \sup_{q \in Q_h} \frac{a(E_1q,q)}{a(q,q)} \\
            \| E_0 \|_A &:= \sup_{q \in Q_h} \frac{a(E_0q,q)}{a(q,q)}
        \end{aligned}
    \end{equation}
    then we also have $\hat{E}_1$ and $\hat{E_0}$ using the same definition.
\end{definition}

\begin{lemma}
    $E_1$ is the error propagation operation using the above definition; therefore, the estimate is satisfied:
    \begin{equation}
        \| E_1 \|_A \leq \eta \|E_0\|_A+\eta
    \end{equation}
    where $\eta=1-\frac{1}{C_1}$. When $\|E_0\|_A<1$, $\|E_1\|_A<1$ is satisfied.
    \label{lem:error_prop}
\end{lemma}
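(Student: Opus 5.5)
Our plan is to follow the classical two-grid analysis for symmetric smoothers in the energy norm $\|\cdot\|_A$ induced by $a(\cdot,\cdot)$. We split $E_1=K_1(I-P_0+E_0P_0)K_1$ into the exact-coarse-solve part $K_1(I-P_0)K_1$ and the inexact-coarse part $K_1E_0P_0K_1$, and estimate them separately. Adding the two bounds should give $\eta+\eta\|E_0\|_A$.

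\textbf{Smoother step.} Since $H=(L+D)D^{-1}(L+D)^T$ is the symmetric Gauss--Seidel matrix and $A$ is symmetric positive semidefinite, we take $H-A\succeq 0$, i.e.\ $N=H-A$ is positive semidefinite. This makes $K_1=I-H^{-1}A$ self-adjoint in $a(\cdot,\cdot)$ with $0\le a(K_1q,q)\le a(q,q)$, so $\|K_1\|_A\le 1$. We also record the standard smoothing property: there is a constant $c_s$ with $\|q\|_H^2\le c_s\lambda_A\|\kappa^{1/2}q\|_{L^2(\Omega)}^2$, since $H$ is spectrally equivalent to the (weighted) diagonal of $A$, which is bounded by $\lambda_A$ times the $\kappa$-weighted mass. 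This gives
\begin{equation*}
a(K_1 q,K_1q)\le a(q,q)-c\,\|Aq\|_{H^{-1}}^2 .
\end{equation*}

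\textbf{Coarse step and assembly.} $P_0$ is the $a$-orthogonal projection onto $Q_{\mathrm{off}}$, so $I-P_0$ is a projection with norm $1$. The key inequality is
\begin{equation*}
a((I-P_0)q,q)\le \Big(1-\tfrac{1}{C_1}\Big)a(q,q)
\end{equation*}
for $q$ in the range of $K_1$. To prove it, we write $e=(I-P_0)q$, note $a(e,e)=a(e,q-\chi)$ for every $\chi\in Q_{\mathrm{off}}$, and bound this by $\|Ae\|\,\|q-\chi\|$. Lemma~\ref{lem:approx_property} supplies $\|\kappa^{1/2}(q-\chi)\|^2\le C_1\lambda_A^{-1}a(q,q)$, and the smoothing property supplies the compensating $\lambda_A$ factor. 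Combining these yields $a((I-P_0)K_1q,K_1q)\le\eta\,a(q,q)$ with $\eta=1-1/C_1$; this is the standard Xu--Zikatanov / Vassilevski two-level argument. For the inexact part we use $\|P_0\|_A\le1$ and $\|K_1\|_A\le 1$, and pick up one more contraction factor from the smoothing in the same way, giving $a(K_1E_0P_0K_1q,q)\le \eta\|E_0\|_A a(q,q)$. Summing the two parts gives the claim. If $\|E_0\|_A<1$, then $\eta(1+\|E_0\|_A)<2\eta$. This is below $1$ only when $\eta<1/2$, so to reach the final assertion we will instead aim for the sharper convex form $\|E_1\|_A\le \eta+(1-\eta)\|E_0\|_A$, which follows from the same argument.

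\textbf{Main obstacle.} The delicate step is matching the constants. The approximation property carries $\lambda_A^{-1}$, and it must be exactly compensated by the smoother bound $\|q\|_H^2\lesssim\lambda_A\|\kappa^{1/2}q\|^2$. We also need the inexact-coarse term to inherit the contraction factor $\eta$, rather than merely being bounded by $\|E_0\|_A$. We will first try to prove the sharper convex-combination bound and then deduce the stated inequality. Failing that, we will absorb the smoother equivalence constant into $C_1$, which Lemma~\ref{lem:approx_property} permits since $C_1>1$ is a generic constant.
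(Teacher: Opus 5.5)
The paper states this lemma without proof, so your plan has to stand on its own. It fails at the inexact-coarse term. The bound $a(K_1E_0P_0K_1q,q)\le\eta\|E_0\|_A\,a(q,q)$ is false, because the smoother does not contract the low-energy components that $Q_{\mathrm{off}}$ is built to capture. For such $q$ you have $K_1q\approx q$ and $P_0K_1q\approx q$, so with $E_0=\delta I$ on $Q_{\mathrm{off}}$ the quotient $a(E_1q,q)/a(q,q)$ is close to $\delta$, not $\eta\delta$.

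Here is a concrete case. Take $A=\begin{pmatrix}1&-s\\-s&1\end{pmatrix}$ with $s=1-\epsilon$, $\kappa\equiv1$, and $Q_{\mathrm{off}}=\mathrm{span}\{(1,1)^T\}$. Then:
\begin{itemize}
\item symmetric Gauss--Seidel gives $K_1=\begin{pmatrix}0&s^3\\0&s^2\end{pmatrix}$;
\item the approximation property holds for every $C_1\ge1$;
\item the exact two-grid factor is $s^4\epsilon/2$;
\item yet $a(E_1q,q)/a(q,q)\to\delta$ for $q=(1,1)^T$ as $\epsilon\to0$.
\end{itemize}
With $C_1=1.05$ and $\delta=0.9$ this exceeds $\eta(1+\delta)\approx0.09$. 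So estimating the two pieces separately and adding cannot work. Correct separate bounds only give $\eta+\|E_0\|_A$, which need not be below $1$. Moreover, for any $\eta<1/2$ the displayed inequality is false in general: choose $\delta\in(\eta/(1-\eta),1)$ and $\epsilon$ small.

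The missing idea is to couple the two pieces. Use three facts: $K_1$ is $a$-self-adjoint, $E_0P_0K_1q\in Q_{\mathrm{off}}$, and $a(P_0K_1q,P_0K_1q)=a(K_1q,K_1q)-a((I-P_0)K_1q,K_1q)$. With $\delta=\|E_0\|_A\le1$ they give
\begin{equation*}
a(E_1q,q)\le(1-\delta)\,a((I-P_0)K_1q,K_1q)+\delta\,a(K_1q,K_1q)\le\bigl(\eta+(1-\eta)\delta\bigr)\,a(q,q).
\end{equation*}
This is the convex form the paper itself uses in Theorem~\ref{theo:E1_error}, and it does give $\|E_1\|_A<1$ whenever $\|E_0\|_A<1$. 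So your fallback is the right target, but it does not ``follow from the same argument''. It is also not uniformly sharper than the displayed bound. It implies $\eta(1+\|E_0\|_A)$ only for $\eta\ge1/2$, i.e.\ after enlarging $C_1$ to at least $2$. In that regime the displayed bound no longer yields the final assertion.

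Your exact-coarse step also has the roles of $q$ and $e$ mixed up. First write $\sup_q a((I-P_0)K_1q,K_1q)/a(q,q)=\|K_1(I-P_0)\|_A^2$. Then apply the approximation property to $e=(I-P_0)v$ itself, not to $q$. The relation $\|e\|_A^2=a(e,e-\chi)\le\|Ae\|_{H^{-1}}\|e-\chi\|_H$, together with $\|K_1e\|_A^2\le\|e\|_A^2-\|Ae\|_{H^{-1}}^2$, gives $\|K_1e\|_A^2\le\bigl(1-1/(c_sC_1)\bigr)\|e\|_A^2$. You can then absorb $c_s$ into $C_1$ as you suggest.
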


\begin{theorem}
    \label{theo:E1_error}
    Let $\hat{E}_0$ and $\hat{E}_1$ be the error propagation operators on predicted spaces, then the inequality is satisfied:
    \begin{equation}
        \| \hat{E}_1 \|_A \leq (1-\hat{\eta})\|\hat{E}_0\|_A+\hat{\eta}
    \end{equation}
    then one can have
    \begin{equation}
        \| \hat{E}_1 \|_A \lesssim 1-\frac{1}{C_1+C_{\mathrm{pert}}\varepsilon^2_{\mathrm{basis}}}
    \end{equation}
    If $\hat{E}_0\|_A<1$, then $\|\hat{E}_1\|_A<1$.
\end{theorem}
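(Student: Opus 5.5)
The plan is to transplant the proof of Lemma~\ref{lem:error_prop} to the predicted offline space $\hat{Q}_{\mathrm{off}}$. The only place the coarse space enters that argument is through the approximation constant $C_1$ of Lemma~\ref{lem:approx_property}. Theorem~\ref{theo:convergence_two_grid} supplies the replacement constant $\hat{C}_1$ for $\hat{Q}_{\mathrm{off}}$. So the proof amounts to redoing the smoothing and coarse-correction estimate with $\hat{C}_1$ in place of $C_1$, and then inserting the simplified bound $\hat{C}_1\le C_1+C_{\mathrm{pert}}\varepsilon_{\mathrm{basis}}^2$.

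\textbf{Step 1: smoothing estimate.} The smoother $K_1=I-H^{-1}A$ is built only from $A$, so it is identical on both spaces. I would use the standard bound $a(K_1q,K_1q)\le a(q,q)-c\,\lambda_A\|\kappa^{1/2}q\|_{L^2(\Omega)}^2$. This holds because $H$ is symmetric positive definite with $H\ge A$ and $\lambda_{\max}(H)\lesssim\lambda_A$.

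\textbf{Step 2: coarse correction.} Split $(I-\hat{P}_0+\hat{E}_0\hat{P}_0)$ into the $a$-orthogonal complement part $I-\hat{P}_0$ and the inexact coarse part $\hat{E}_0\hat{P}_0$. This is valid because $\hat{P}_0$ is the $a$-orthogonal projection onto $\hat{Q}_{\mathrm{off}}$, and Theorem~\ref{theo:cea_estimate} gives the required Galerkin orthogonality. For $w=(I-\hat{P}_0)q$, Theorem~\ref{theo:convergence_two_grid} applied to $w$ gives $\|\kappa^{1/2}w\|^2\le\hat{C}_1\lambda_A^{-1}a(w,w)$. Combining this with Step~1 yields $a(K_1w,K_1w)\le(1-\hat{C}_1^{-1})a(w,w)$. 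The $\hat{E}_0\hat{P}_0$ part is bounded by $\|\hat{E}_0\|_A$ times the energy of $\hat{P}_0q$.

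\textbf{Step 3: assembly.} Summing the two parts with weights $\hat{\eta}:=1-\hat{C}_1^{-1}$ gives a convex-type combination, and hence the first inequality
\[
\|\hat{E}_1\|_A\le(1-\hat{\eta})\|\hat{E}_0\|_A+\hat{\eta}.
\]
This is written in the theorem's convention, which matches Lemma~\ref{lem:error_prop} up to relabelling the weight. If $\|\hat{E}_0\|_A<1$, the right-hand side is a strict convex combination of a number less than $1$ and $\hat{\eta}<1$, so $\|\hat{E}_1\|_A<1$. Taking the worst case $\|\hat{E}_0\|_A\lesssim$ const and substituting $\hat{C}_1\le C_1+C_{\mathrm{pert}}\varepsilon_{\mathrm{basis}}^2$ gives $\|\hat{E}_1\|_A\lesssim 1-(C_1+C_{\mathrm{pert}}\varepsilon_{\mathrm{basis}}^2)^{-1}$, using that $x\mapsto 1-1/x$ is increasing.

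\textbf{Main obstacle.} The hard step is Step~2: making the cross terms vanish when $\hat{E}_0\hat{P}_0$ and $I-\hat{P}_0$ are combined inside $K_1(\cdot)K_1$. This needs both $\hat{P}_0$ to be an exact $a$-projection and $\hat{E}_0$ to map into $\hat{Q}_{\mathrm{off}}$. The first requires $\hat{Q}_{\mathrm{off}}$ to be a genuine subspace of full dimension, which I would take from Assumption~\ref{assump:dimension_consist} ($c_R>0$). If exact orthogonality fails, I would fall back on a triangle inequality and absorb the extra term into the generic constant hidden in $\lesssim$.
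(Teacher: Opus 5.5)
Your top-level reduction is the same as the paper's proof. That proof just re-invokes Lemma~\ref{lem:error_prop} with the unchanged smoother $K_1$, replaces $Q_{\mathrm{off}},P_0,C_1$ by $\hat Q_{\mathrm{off}},\hat P_0,\hat C_1$, and inserts $\hat C_1\le C_1+C_{\mathrm{pert}}\varepsilon_{\mathrm{basis}}^2$ from Theorem~\ref{theo:convergence_two_grid}. Where you go further and re-derive the contraction on the $a$-orthogonal complement, Steps 1--2 fail as written. The cross terms you single out as the main obstacle are actually the harmless part.

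\textbf{The gap in Steps 1--2.} The smoothing bound $a(K_1q,K_1q)\le a(q,q)-c\,\lambda_A\|\kappa^{1/2}q\|_{L^2(\Omega)}^2$ cannot hold with a $c$ independent of the conditioning. For a low-energy $q$ (e.g.\ an eigenvector of $A$ with small eigenvalue) the right-hand side is negative. What $H=A+LD^{-1}L^T\ge A$ and $\lambda_{\max}(H)\lesssim\lambda_A$ actually give is the residual form
\[
a(K_1q,K_1q)=a(q,q)-(Aq)^TH^{-1}(2H-A)H^{-1}Aq\le a(q,q)-c\,\lambda_A^{-1}\|Aq\|^2 .
\]
Step 2 has a further problem. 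Theorem~\ref{theo:convergence_two_grid} applied to $w=(I-\hat P_0)q$ controls only $\inf_{\hat\chi\in\hat Q_{\mathrm{off}}}\|\kappa^{1/2}(w-\hat\chi)\|^2$, not $\|\kappa^{1/2}w\|^2$. Being $a$-orthogonal to $\hat Q_{\mathrm{off}}$ does not make $\hat\chi=0$ the best weighted-$L^2$ approximation. Even an upper bound on $\|\kappa^{1/2}w\|^2$ would point the wrong way: it bounds from above the quantity you subtract, whereas a contraction needs a lower bound.

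\textbf{The missing step.} You need the standard duality argument. For $w\perp_a\hat Q_{\mathrm{off}}$ and every $\hat\chi\in\hat Q_{\mathrm{off}}$,
\[
a(w,w)=a(w,w-\hat\chi)=(Aw)^T(w-\hat\chi)\le\|Aw\|\,\|w-\hat\chi\| .
\]
Taking the infimum and using the approximation property gives $a(w,w)\le\hat C_1\lambda_A^{-1}\|Aw\|^2$, modulo the mass-matrix scaling implicit in the paper's norms. The residual smoothing bound then yields $a(K_1w,K_1w)\le(1-c\,\hat C_1^{-1})a(w,w)$. Alternatively, cite the Xu--Zikatanov identity, where the weak (infimum) approximation property enters directly.

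With this in place, use that $K_1$ is $a$-self-adjoint. Then $a(\hat E_1q,q)=a\bigl((I-\hat P_0+\hat E_0\hat P_0)K_1q,K_1q\bigr)$ and $\|(I-\hat P_0)K_1\|_a=\|K_1(I-\hat P_0)\|_a$. Your Pythagorean splitting then gives exactly $(1-\hat\eta)\|\hat E_0\|_A+\hat\eta$, after absorbing $c$ into $\hat C_1$.

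\textbf{On the relabelling remark.} This convex form is not a relabelling of Lemma~\ref{lem:error_prop} as printed. The printed bound $\eta\|E_0\|_A+\eta$ does not even imply the lemma's own conclusion $\|E_1\|_A<1$ unless $\eta\le 1/2$. So a correct derivation along your lines, rather than a verbatim transplant of the lemma, is what produces the stated inequality.
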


\begin{proof}
    According to \hyperref[lem:error_prop]{Lemma~\ref{lem:error_prop}}, the convergence factor of two-grid preconditioner is controlled by $C_1$. Here we are using the same smoothing operator $K_1$, so the smoothing property remains. The difference is that the space and interpolation operator change from $Q_{\mathrm{off}}$, $P_0$ to $\hat{Q}_{\mathrm{off}}$ and $\hat{P}_0$. According to \hyperref[theo:convergence_two_grid]{Theorem~\ref{theo:convergence_two_grid}}, $\hat{C}_1$ is controlled by $C_1+C_{\mathrm{pert}}\varepsilon_{\mathrm{basis}}^2$, so we can easily get
    \begin{equation}
        \|\hat{E}_1\|_A \leq (1-\hat{\eta})\|\hat{E}_0\|_A+\hat{\eta},\quad\hat{\eta}=1-\frac{1}{\hat{C}_1}
    \end{equation}
\end{proof}

\begin{theorem}
    Define the stiffness matrices on coarse-grid level
    \begin{equation}
        A_H=R_{\mathrm{off}}^TAR_{\mathrm{off}},\quad \hat{A}_H=\hat{R}_{\mathrm{off}}^TA\hat{R}_{\mathrm{off}} \notag
    \end{equation}

    So we have
    \begin{equation}
        \| \hat{A}_H-A_H \| \lesssim \|\Delta R\|+\|\Delta R\|^2 \lesssim \varepsilon_{\mathrm{loc}}+\varepsilon_{\mathrm{loc}}^2
    \end{equation}
\end{theorem}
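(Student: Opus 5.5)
The idea is to write the perturbation exactly and then bound it term by term. Set $\hat{R}_{\mathrm{off}} = R_{\mathrm{off}} - \Delta R$ and expand the Galerkin triple product:
\begin{equation}
    \hat{A}_H - A_H = -R_{\mathrm{off}}^T A\,\Delta R - \Delta R^T A R_{\mathrm{off}} + \Delta R^T A\,\Delta R .
\end{equation}
This identity is purely algebraic. It already separates the error into two parts. One is linear in $\Delta R$: the two cross terms, which are transposes of each other because $A=BM^{-1}B^T$ is symmetric. The other is quadratic in $\Delta R$.

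Next I take norms, using the triangle inequality and submultiplicativity in the spectral norm:
\begin{equation}
    \|\hat{A}_H - A_H\| \leq 2\,\|R_{\mathrm{off}}\|\,\|A\|\,\|\Delta R\| + \|A\|\,\|\Delta R\|^2 .
\end{equation}
The hidden constant in $\lesssim$ is then $\max\{2\|R_{\mathrm{off}}\|\|A\|,\ \|A\|\}$. Here $\|A\|=\lambda_A$, the largest eigenvalue appearing in Lemma~\ref{lem:approx_property}. The factor $\|R_{\mathrm{off}}\|$ is bounded because the reference offline basis is fixed, and its columns are normalized eigenvectors of the local spectral problems. Equivalently, the upper bound in Assumption~\ref{assump:dimension_consist}, applied to the reference basis, gives $\|R_{\mathrm{off}}\|\le C_R^{1/2}$ after measuring in the $\kappa$-weighted $L^2$ inner product. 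This yields the first inequality, $\|\hat{A}_H-A_H\|\lesssim\|\Delta R\|+\|\Delta R\|^2$.

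For the second inequality I invoke the first theorem of this section, which gives $\|\Delta R\|\lesssim\varepsilon_{\mathrm{loc}}$. Squaring gives $\|\Delta R\|^2\lesssim\varepsilon_{\mathrm{loc}}^2$. Substituting both into the previous bound gives $\|\hat{A}_H-A_H\|\lesssim\varepsilon_{\mathrm{loc}}+\varepsilon_{\mathrm{loc}}^2$. As a remark, when $\varepsilon_{\mathrm{loc}}\le 1$ the quadratic term can be absorbed into the linear one.

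The main obstacle is keeping the norms consistent. $\Delta R$ was estimated through the local energy norms $\|\cdot\|_{l^2\to a_i}$, whereas the product $R^TAR$ requires matrix operator norms. I would handle this by working throughout in the norm in which the first theorem of this section was stated, namely the norm induced by $a(\cdot,\cdot)$ on the range. In that norm $\|A^{1/2}\Delta R\|$ is controlled directly. I would then rewrite the quadratic term as $(A^{1/2}\Delta R)^T(A^{1/2}\Delta R)$ and each cross term as $(A^{1/2}R_{\mathrm{off}})^T(A^{1/2}\Delta R)$. This route avoids an extra factor of $\lambda_A$ and shows that the constant depends only on the energy norm of the reference basis.
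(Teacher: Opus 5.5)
Your proposal is correct and is essentially the paper's proof: expand $\hat{R}_{\mathrm{off}}^T A\hat{R}_{\mathrm{off}}$ into two cross terms and a quadratic term, bound them by $\|A\|\|R_{\mathrm{off}}\|\|\Delta R\|$ and $\|A\|\|\Delta R\|^2$, then invoke the first theorem for $\|\Delta R\|\lesssim\varepsilon_{\mathrm{loc}}$. Your sign $\hat{R}_{\mathrm{off}}=R_{\mathrm{off}}-\Delta R$ matches the paper's definition of $\Delta R$ better than the paper's own expansion does, which is immaterial for the norms; your optional $A^{1/2}$ refinement goes beyond the paper and is not quite ``direct,'' because the first theorem only controls the local interior energies $a_i$, not the global energy $a(\cdot,\cdot)$ with its coarse-edge jumps, but the main argument does not need it.
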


\begin{proof}
    By the definition,
    \begin{equation}
        \begin{aligned}
            \hat{A}_H &= \hat{R}^T_{\mathrm{off}}A\hat{R}_{\mathrm{off}} = (R_{\mathrm{off}}+\Delta R)^TA(R_{\mathrm{off}}+\Delta R) \\
            \Delta A &= \Delta R^TAR_{\mathrm{off}}+R^T_{\mathrm{off}}A\Delta R + \Delta R^TA\Delta R
        \end{aligned}
    \end{equation}

    Take the matrix norm and apply the inequality, one can obtain
    \begin{equation}
        \begin{aligned}
            \|\Delta R^TAR_{\mathrm{off}}\| &\leq \|A\|\|R_{\mathrm{off}}\|\|\Delta R\| \\
            \|R_{\mathrm{off}}^TA\Delta R\| &\leq \|A\|\|R_{\mathrm{off}}\|\|\Delta R\| \\
            \|\Delta R^TA\Delta R\| &\leq \|A\|\|\Delta R\|^2
        \end{aligned}\notag
    \end{equation}
    Sum up these terms and we can obtain
    \begin{equation}
        \| \hat{A}_H - A_H\| \lesssim \|\Delta R\|+\|\Delta R\|^2 \lesssim \varepsilon_{\mathrm{loc}}+\varepsilon_{\mathrm{loc}}^2
    \end{equation}
\end{proof}

Therefore, we can derive the error convergence of pressure solutions using iterative two-grid preconditioner.
\begin{theorem}
    The final pressure error is jointly controlled by spatial approximation error, learning error and iteration error. Specifically, let $p^{(k)}$ denote the approximation solution after $k$ iterations of two-grid algorithm using $\hat{R}_{\mathrm{off}}$, then we have
    \begin{equation}
        \left\| p_h-p^{(k)} \right\|_a \lesssim \inf_{\chi \in Q_{\mathrm{off}}} \left\| p_h-\chi\right\|_a + \varepsilon_{\mathrm{basis}} + \|\hat{E}_1\|_A^k\|\hat{p}_H-p^{(0)}\|_a
    \end{equation}
    \label{theo:final_error}
\end{theorem}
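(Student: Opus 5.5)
The proof rests on a triangle inequality that splits the total error into a discretization part and an iteration part:
\begin{equation}
    \left\| p_h-p^{(k)} \right\|_a \leq \left\| p_h-\hat{p}_H \right\|_a + \left\| \hat{p}_H-p^{(k)} \right\|_a .
\end{equation}
Here $\hat{p}_H$ is the Galerkin solution in the predicted offline space $\hat{Q}_{\mathrm{off}}$, which the two-grid iteration driven by $\hat{R}_{\mathrm{off}}$ approximates. The first term collects the spatial and learning errors. The second is the pure iteration error, and it will give the last term of the statement.

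For the first term I would apply Theorem~\ref{theo:cea_estimate}, which bounds it by $\inf_{\hat\chi\in\hat{Q}_{\mathrm{off}}}\|p_h-\hat\chi\|_a$. I then need to pass from the predicted space back to the true space. Take $\chi^*\in Q_{\mathrm{off}}$ to be the best approximation of $p_h$ in $\|\cdot\|_a$, with coefficient vector $p$ in the basis $\{\psi_j\}$. Map it to $\hat\chi=\sum_j p_j\hat\psi_j$, exactly as in the proof of Theorem~\ref{theo:convergence_two_grid}. The triangle inequality gives
\[
\|p_h-\hat\chi\|_a\le \|p_h-\chi^*\|_a+\|\chi^*-\hat\chi\|_a ,
\]
and the first piece is already $\inf_{\chi\in Q_{\mathrm{off}}}\|p_h-\chi\|_a$.

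The remaining piece $\|\chi^*-\hat\chi\|_a$ is where I expect the main obstacle. The quantity $\varepsilon_{\mathrm{basis}}$ measures basis errors in the weighted $L^2$ norm $\|\kappa^{1/2}\cdot\|_{L^2}$, not in the energy norm $\|\cdot\|_a$. I would bridge the two with a discrete inverse (spectral) inequality on $Q_h$, namely $a(q,q)\le \lambda_A\|q\|^2$ in the matrix sense, which is why $\lambda_A$ appears elsewhere. Combined with Cauchy--Schwarz over the coefficients, this gives
\[
\|\chi^*-\hat\chi\|_a\le \lambda_A^{1/2}\|p\|_2\,\varepsilon_{\mathrm{basis}} .
\]
Next, $\|p\|_2$ must be controlled independently of the error. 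I would use the lower Riesz bound of Assumption~\ref{assump:dimension_consist}, applied to the true basis or via stability of $\hat\psi_j$ under small perturbation, to get $\|p\|_2\lesssim \|\kappa^{1/2}\chi^*\|_{L^2}$. Stability of the best approximation then gives $\|\kappa^{1/2}\chi^*\|_{L^2}\lesssim \|p_h\|$, which is bounded by the data $f$. All of these factors, including $\lambda_A^{1/2}$, would be absorbed into the hidden constant in $\lesssim$, consistent with the paper's convention for treating $C$ as a generic constant. This yields the $\varepsilon_{\mathrm{basis}}$ term. If the absorption of $\lambda_A^{1/2}$ looks too lossy, the fallback is to state the dependence explicitly and note that it is fixed for a given discretization.

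For the iteration term, the two-grid method is a stationary iteration whose error satisfies $\hat{p}_H-p^{(k)}=\hat{E}_1^{\,k}(\hat{p}_H-p^{(0)})$. Submultiplicativity in the $A$-norm then gives $\|\hat{p}_H-p^{(k)}\|_a\le \|\hat{E}_1\|_A^k\|\hat{p}_H-p^{(0)}\|_a$. Theorem~\ref{theo:E1_error} guarantees $\|\hat{E}_1\|_A<1$, so this term decays geometrically.

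Two technical points need care here. First, the definition of $\|\cdot\|_A$ uses $a(E_1q,q)/a(q,q)$, so $E_1$ must be $a$-symmetric for this to act as an operator norm. The symmetric pre- and post-smoothing with $K_1$ ensures this. Second, the fixed point of the iteration must be identified with $\hat{p}_H$. Summing the three bounds then completes the estimate.
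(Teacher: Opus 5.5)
Your proposal follows the paper's proof step for step. It uses the same split $p_h-p^{(k)}=(p_h-\hat p_H)+(\hat p_H-p^{(k)})$, Theorem~\ref{theo:cea_estimate} for the first piece, the coefficient transfer $\chi^*=\sum_j p_j\psi_j\mapsto\hat\chi=\sum_j p_j\hat\psi_j$, and a contraction bound for the second piece. Your treatment of $\|p_h-\hat p_H\|_a$ is more careful than the paper's. The paper cites Theorem~\ref{theo:convergence_two_grid}, but that result bounds a weighted-$L^2$ best approximation, not the energy-norm one in the statement, and your inverse inequality is the right bridge. The cost is a hidden constant depending on $\lambda_A$, the contrast, $h$ and $\|p_h\|$. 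Also, since $a$ is only a seminorm for this Neumann problem, you must fix the additive constant in $\chi^*$ and use a discrete Poincar\'e inequality. That part is fine.

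\textbf{The gap.} The genuine gap is the point you deferred: the fixed point of the iteration cannot be identified with $\hat p_H$. In Algorithm~\ref{alg:two_grid}, the smoother and the coarse correction both act on the residual $F-Ax$ of the fine system. Likewise, $\hat E_1=K_1(I-\hat P_0+\hat E_0\hat P_0)K_1$ is built from the fine smoother and a projection defined on all of $Q_h$. Starting from $x^0=p_h$ gives $x^1=x^2=x^3=p_h$, so the fixed point is $p_h$. By contrast, $\hat p_H$ (as a fine-grid vector) only satisfies $\hat R_{\mathrm{off}}^T(F-A\hat p_H)=0$, so pre-smoothing moves it. The correct relation is $p_h-p^{(k)}=\hat E_1^k(p_h-p^{(0)})$, hence $\hat p_H-p^{(k)}=(I-\hat E_1^k)(\hat p_H-p_h)+\hat E_1^k(\hat p_H-p^{(0)})$, which tends to $\hat p_H-p_h\neq 0$. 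So the identity $\hat p_H-p^{(k)}=\hat E_1^k(\hat p_H-p^{(0)})$ is false. The paper asserts the same bound ``directly \dots by the definition of error propagation operator'', so it cannot be borrowed from there.

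\textbf{The repair.} Split the \emph{initial} error instead:
\[
p_h-p^{(k)}=\hat E_1^k(p_h-\hat p_H)+\hat E_1^k(\hat p_H-p^{(0)}).
\]
Then use $\|\hat E_1\|_A\le 1$ as a genuine operator norm. This is where your $a$-symmetry remark is really needed. With an exact coarse solve, $K_1$ is $a$-self-adjoint and $a(\hat E_1q,q)=\|(I-\hat P_0)K_1q\|_a^2\ge 0$, so the paper's Rayleigh-quotient definition coincides with the operator norm. This gives $\|p_h-p^{(k)}\|_a\le\|p_h-\hat p_H\|_a+\|\hat E_1\|_A^k\|\hat p_H-p^{(0)}\|_a$, and your first-term argument then finishes the proof.

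This also shows the statement is far from sharp. Since $\|p_h-p^{(k)}\|_a\le\|\hat E_1\|_A^k\|p_h-p^{(0)}\|_a$, the basis error affects only the convergence rate (through $\hat C_1$) and does not set an error floor. If instead you meant the iteration to solve the coarse system $\hat R_{\mathrm{off}}^TA\hat R_{\mathrm{off}}\hat p=\hat R_{\mathrm{off}}^TF$, then $\hat p_H$ is indeed its fixed point. But $\hat E_1$ as defined is not that iteration's error propagator, so Theorem~\ref{theo:E1_error} would give you nothing.
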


\begin{proof}
    Take the error decomposition
    \begin{equation}
        p_h - p^{(k)} = \underbrace{(p_h-\hat{p}_H)}_{\mathrm{Term~1}} + \underbrace{(\hat{p}_H - p^{(k)})}_{\mathrm{Term~2}}
    \end{equation}

    For term 1, the error is controlled by the optimized approximation error of $\hat{Q}_{\mathrm{off}}$ (\hyperref[theo:cea_estimate]{Theorem~\ref{theo:cea_estimate}}). We further apply \hyperref[theo:convergence_two_grid]{Theorem~\ref{theo:convergence_two_grid}}, this approximation error is jointly controlled by the one in $Q_{\mathrm{off}}$ and $\varepsilon_{\mathrm{basis}}$. So we have
    \begin{equation}
        \| p_h-\hat{p}_H\|_a \lesssim \inf_{\chi \in Q_{\mathrm{off}}}\|p_h-\chi\|_a+\varepsilon_{\mathrm{basis}}
    \end{equation}

    The second term can directly be obtained by the definition of error propagation operator:
    \begin{equation}
        \|\hat{p}_H-p^{(k)}\|_a \leq \|\hat{E}_1\|_A^k\|\hat{p}_H-p^{(0)}\|_a
    \end{equation}

    Sum up the two term and we can obtain the conclusion.
\end{proof}

Furthermore, if we write the learning error $\varepsilon_{\mathrm{basis}}$ in detail, we will have another corollary:
\begin{corollary}
    The learning error can be decomposed as the sum of three types of sub-errors:
    \begin{enumerate}
        \item approximation error: $\varepsilon_{\mathrm{approx}}$;
        \item optimization error: $\varepsilon_{\mathrm{optim}}$;
        \item generalization error: $\varepsilon_{\mathrm{gen}}$.
    \end{enumerate}

    Substitute them into \hyperref[theo:final_error]{Theorem~\ref{theo:final_error}}, one can obtain
    \begin{equation}
        \| p_h-p^{(k)}\|_a \lesssim \inf_{\chi \in Q_{\mathrm{off}}}\|p_h-\chi\|_a + \varepsilon_{\mathrm{approx}}+\varepsilon_{\mathrm{optim}}+\varepsilon_{\mathrm{gen}}+\|\hat{E}_1\|_A^k\|\hat{p}_H-p^{(0)}\|_a
    \end{equation}
    and 
    \begin{equation}
        \hat{C}_1 \leq C_1 +  C_{\mathrm{pert}}(\varepsilon_{\mathrm{approx}}+\varepsilon_{\mathrm{optim}}+\varepsilon_{\mathrm{gen}})^2
    \end{equation}

    Substituting these into \hyperref[theo:E1_error]{Theorem~\ref{theo:E1_error}}, we know that $\|\hat{E}_1\|_A$ is controlled by the three types of neural network error. Thus we can obtain the final conclusion.
\end{corollary}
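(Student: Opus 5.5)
The plan is to substitute a decomposition of the learning error into the two places where $\varepsilon_{\mathrm{basis}}$ already appears, namely Theorem~\ref{theo:final_error} and the perturbed constant $\hat{C}_1$ of Theorem~\ref{theo:convergence_two_grid}. Nothing new has to be proved about the two-grid iteration. The first step is to fix what the three sub-errors mean at the level of the basis functions. Let $\mathcal{G}^\ast$ be the exact map $\kappa\mapsto\Psi$ given by the local spectral problems. Let $\mathcal{G}_{\Theta^\ast}$ be the best network in the hypothesis class, measured in the weighted norm $\|\kappa^{1/2}\cdot\|_{L^2(\Omega)}$ summed over $j$. Let $\mathcal{G}_{\Theta_{\mathrm{emp}}}$ be the minimizer of the empirical loss \eqref{eq:total_loss}, and let $\mathcal{G}_{\hat\Theta}$ be the network actually returned by training. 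For each $j$ we then write
\begin{equation}
\psi_j-\hat\psi_j=(\psi_j-\psi_j^{\Theta^\ast})+(\psi_j^{\Theta^\ast}-\psi_j^{\Theta_{\mathrm{emp}}})+(\psi_j^{\Theta_{\mathrm{emp}}}-\hat\psi_j).
\end{equation}
The first bracket is the approximation part, the second the generalization part, and the third the optimization part. We define $\varepsilon_{\mathrm{approx}}$, $\varepsilon_{\mathrm{gen}}$ and $\varepsilon_{\mathrm{optim}}$ as the $\ell^2$-sums over $j$ of the weighted $L^2$ norms of these three pieces, in the same way $\varepsilon_{\mathrm{basis}}$ is defined.

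The second step applies Minkowski's inequality twice: once in $\|\kappa^{1/2}\cdot\|_{L^2(\Omega)}$ for each $j$, and once in $\ell^2$ over $j$. This gives
\begin{equation}
\varepsilon_{\mathrm{basis}}\le\varepsilon_{\mathrm{approx}}+\varepsilon_{\mathrm{gen}}+\varepsilon_{\mathrm{optim}}.
\end{equation}
This is the main obstacle. The decomposition is only meaningful if the training loss (MSE plus gradient term on the fine grid) controls the $\kappa$-weighted $L^2$ error. I would argue this through the bound $\|\kappa^{1/2}v\|_{L^2}\le\kappa_{\max}^{1/2}h\,\|v\|_2$ for the grid vector $v$. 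That constant depends on the contrast, so it has to be absorbed into the symbol $\lesssim$. Alternatively, and more simply, the three errors can be defined directly in the weighted norm, which makes the inequality a pure triangle inequality.

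The third step inserts this bound into Theorem~\ref{theo:final_error}. The term $\varepsilon_{\mathrm{basis}}$ enters that estimate linearly, so the first displayed inequality of the corollary follows immediately. For $\hat{C}_1$, we use the simplified estimate $\hat{C}_1\le C_1+C_{\mathrm{pert}}\varepsilon_{\mathrm{basis}}^2$ from the end of the proof of Theorem~\ref{theo:convergence_two_grid}. The map $t\mapsto t^2$ is monotone on $t\ge0$, so squaring the sum bound gives the second inequality. Here I would note that the smallness assumption that was used to absorb the linear term must now hold for the sum of the three errors.

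Finally, we insert this into Theorem~\ref{theo:E1_error}. The function $x\mapsto 1-1/x$ is increasing, so
\begin{equation}
\|\hat E_1\|_A\lesssim 1-\frac{1}{C_1+C_{\mathrm{pert}}(\varepsilon_{\mathrm{approx}}+\varepsilon_{\mathrm{optim}}+\varepsilon_{\mathrm{gen}})^2}.
\end{equation}
Therefore the contraction factor, and with it the iteration term $\|\hat E_1\|_A^k\|\hat p_H-p^{(0)}\|_a$, is controlled by the three network errors, and the contraction factor stays below $1$ whenever $\|\hat E_0\|_A<1$.
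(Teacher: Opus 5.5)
Your argument follows the paper's own route. There, the corollary is justified only by substituting a splitting of $\varepsilon_{\mathrm{basis}}$ into Theorem~\ref{theo:final_error}, into the bound on $\hat{C}_1$, and then into Theorem~\ref{theo:E1_error}. Your telescoping identity plus Minkowski's inequality is more precise than the paper. You correctly obtain only $\varepsilon_{\mathrm{basis}}\le\varepsilon_{\mathrm{approx}}+\varepsilon_{\mathrm{gen}}+\varepsilon_{\mathrm{optim}}$, not an equality, and that is all that is used. Defining the three pieces directly in the $\kappa^{1/2}$-weighted norm is the right choice. The detour through $\kappa_{\max}^{1/2}h\|v\|_2$ is not needed for the corollary, and it would hide a contrast-dependent factor in $\lesssim$, which a high-contrast estimate should avoid.

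The step you should not import as stated is $\hat{C}_1\le C_1+C_{\mathrm{pert}}\varepsilon_{\mathrm{basis}}^2$, and your explanation of it is backwards. The constant of Theorem~\ref{theo:convergence_two_grid} is exactly $\hat{C}_1=(C_1^{1/2}+C\lambda_A^{1/2}\varepsilon_{\mathrm{basis}})^2$. Its linear term $2CC_1^{1/2}\lambda_A^{1/2}\varepsilon_{\mathrm{basis}}$ dominates the quadratic one as $\varepsilon_{\mathrm{basis}}\to 0$. Smallness therefore prevents, rather than permits, absorbing it into $C_{\mathrm{pert}}\varepsilon_{\mathrm{basis}}^2$ with $C_1$ unchanged.

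The repair keeps your structure. The exact expression is nondecreasing in $\varepsilon_{\mathrm{basis}}$, so inserting your sum bound gives $\hat{C}_1\le(C_1^{1/2}+C\lambda_A^{1/2}s)^2\le C_1+C'(s+s^2)$, with $s=\varepsilon_{\mathrm{approx}}+\varepsilon_{\mathrm{optim}}+\varepsilon_{\mathrm{gen}}$. This is a linear-plus-quadratic perturbation that reduces to $C_1$ at $s=0$ and needs no smallness assumption. The corollary's second display should be read in this form.

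In the last step, monotonicity of $x\mapsto 1-1/x$ only controls $\hat\eta$. To pass to $\|\hat{E}_1\|_A$ you also need $(1-\hat\eta)\|\hat{E}_0\|_A+\hat\eta$ to be nondecreasing in $\hat\eta$, which holds exactly when $\|\hat{E}_0\|_A\le 1$. Keep this explicit inequality rather than $\lesssim$, because an unspecified constant would destroy the conclusion $\|\hat{E}_1\|_A<1$.
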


%% file: section5_numerical_experiments.tex
\section{Numerical Experiments}
\label{sec:experiments}

\subsection{Data Acquisition and Processing}
\begin{figure}[ht]
    \centering
    \includegraphics[width=1.0\linewidth]{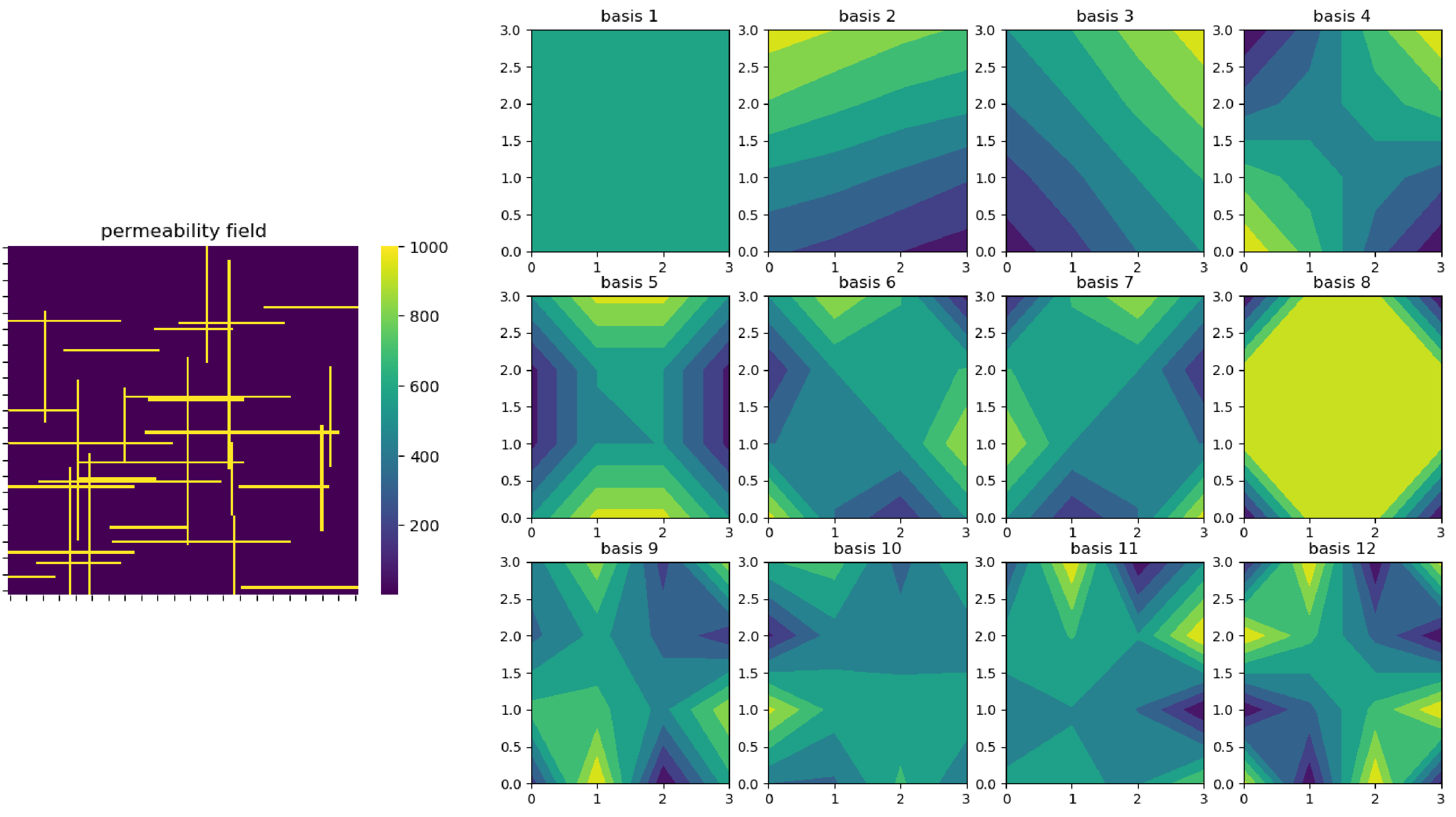}
    \caption{Example of our dataset. Right: high-contrast fractured permeability field. The yellow lines refer to the fractures. Right: total multiscale basis functions. Totally we have 12 basis functions, and the first (basis 1) is piecewise constant (1 or -1).}
    \label{fig:example_dataset}
\end{figure}

In this subsection, we illustrate the method of data generation, the information about the dataset, and the data preprocessing techniques.

The data generation is divided into two parts. First, we use Karhunen-Lo\`eve Expansion (KLE) to repeatedly generate the fractured permeability fields \cite{fukunaga1970application}. KLE is a useful tool for simulating random fields and has been used in many studies on the topic of subsurface flow modeling \cite{liu2025learning,chen2025prediction,li2025hybrid,li2025dual}.

The second part is to derive different types of pressure solutions, including reference solution, multiscale solution (see \hyperref[appendix:ms_sol]{\ref{appendix:ms_sol}}), and the predicted solutions using iterative two-grid algorithm (see this section and \hyperref[appendix:pre_sol]{\ref{appendix:pre_sol}}).

In the data preprocessing step, we logarithmize the $\kappa$, scaling down the original high-contrast fields to a small range. We can also optionally further normalize this field, but we did not choose to do so in our numerical experiments. For $\Psi$, we reshape the vector formed by vertically concatenating multiple coarse grid multi-scale basis functions into a 128*128 quasi-image. The data of pressure fields and offline multiscale basis functions are obtained using GMsFEM.

We generated a total of 4000 sets of unique random data, and randomly split 3500 sets of samples for model training and 500 sets for testing model performance. \hyperref[fig:example_dataset]{Figure~\ref{fig:example_dataset}} is one example of our data that will be used in model training.

\subsection{Experiment Setup}
The process of generating our data was carried out using MATLAB R2021a under the support of the Windows Server 2016 Datacenter system. This part involves the generation of permeability fields based on KLE, as well as the calculation of multi-scale basis functions, reference pressure solutions, multi-scale solutions, and two-grid iterative solution data. The training of the neural network was conducted using the Pytorch framework with an Nvidia Tesla V100 GPU of 32GB.

The network parameters were optimized using the Adam optimizer, which provides adaptive parameter updates and has demonstrated robust convergence behavior in our experiments \cite{kingma2014adam}.

The relevant hyperparameters are shown in \hyperref[tab:hyper_params]{Table~\ref{tab:hyper_params}}:

\begin{table}[htbp]
  \centering
  \caption{Experimental Hyperparameter Settings}
  \label{tab:hyper_params}
  \begin{tabular}{l|l}
    \toprule
    \textbf{Hyperparameter} & \textbf{Value} \\
    \midrule
    Truncated modes of Fourier operator & 16 (x-axis); 16 (y-axis) \\
    Fine grid resolution & $128\times 128$ \\
    Coarse grid resolution & $4\times 4$ \\
    Gradient loss weight & 0.1 \\
    Optimizer & Adam \\
    Learning rate & 0.001 \\
    Epochs & 60 \\
    \bottomrule
  \end{tabular}
\end{table}

\subsection{Experiment Results}
% 几个部分：
% part 1. 神经网络的训练结果，把学习曲线、MSE和R2的值弄成表格展示
% part 2. baseline的结果，需要展示的是：重构的结果+相对L2误差直方图+数据图表
% part 3. 基于mm=6的基函数等值线图+Roff+A+压力+绝对误差，然后说一下我们提供了其他mm选择时的数值算例在附录
\begin{comment}
    我们对结果的展示将会分成多个部分。首先，我们讨论混合神经网络在计算多尺度基函数中展现的效率。对于模型性能的评估，我们遵循经典，同时也是最常用的评估指标：均方误差（MSE）和决定系数（R2）。MSE用于衡量预测结果与参考解之间平方误差的平均水平，而R2用于评估模型对目标数据方差的解释能力。对应的计算方法为
    《两个指标的计算公式+符号解释》
    根据GMsFEM的理论，第一个多尺度基函数一定是分片的1或者-1，这一变量仅和空间位置有关，我们不对这一函数进行学习。图3为我们的方法计算得出的某一粗网格上11个多尺度基函数的示意图（basis2~12），同时，表2展示了在测试集上对不同基函数学习的MSE和R2.我们可以轻易观察到，所有的MSE都保持在10^-3范畴，R2都大于0.9。这些证据都能够说明我们的模型能够比较精确地学习到这一映射关系。图4为对应的学习曲线，这也能够侧面作证我们的方法在加速替代离线计算上具备良好的收敛能力。
\end{comment}
The experimental results are presented from several perspectives. We first focus on the efficiency and accuracy of the proposed hybrid neural network in computing multiscale basis functions. To evaluate the model performance, we adopt two classical and widely used metrics, namely the mean squared error (MSE) and the coefficient of determination ($R^2$). Specifically, MSE is used to measure the average squared difference between the predicted results and the reference solutions, while $R^2$ is used to assess the explanatory capability of the model with respect to the variance of the target data. These two metrics are defined as
\begin{equation}
    \mathrm{MSE}=\frac{1}{N}\sum_{i=1}^{N}\left( \psi_i - \hat{\psi}_i \right)^2, \quad R^2=1-\frac{\sum_{i=1}^{N}\left( \psi_i-\hat{\psi}_i \right)^2}{\sum_{i=1}^{N}\left( \psi_i-\bar{\psi}_i \right)^2}
\end{equation}
where $N$ denotes the total number of sample points, $\psi_i$ is the reference value of the $i$-th sample, $\hat{\psi}_i$ is the corresponding predicted value, and $\bar{y}$ represents the mean of all reference values. In general, a smaller MSE and a larger $R^2$ indicate better predictive performance. \hyperref[fig:basis_pred]{Figure~\ref{fig:basis_pred}} illustrates the 11 multiscale basis functions computed by the proposed method on a representative coarse block, namely basis 2–12.

According to the theory of GMsFEM, the first multiscale basis function is always a piecewise constant function taking the value 1 or -1. Since this basis function depends only on the spatial location, it is not learned in our framework. \hyperref[tab:performance_metrics]{Table~\ref{tab:performance_metrics}} reports the MSE and $R^2$ values of different basis functions on the test set. It can be clearly observed that all MSE values remain at the order of $10^{-3}$, while all $R^2$ values are above 0.9. These results demonstrate that the proposed model is able to accurately learn the mapping from the input permeability field to the corresponding multiscale basis functions. In addition, the learning curves shown in \hyperref[fig:learning_curve]{Figure~\ref{fig:learning_curve}} further verify the favorable convergence behavior of our method, indicating its strong potential for accelerating and replacing the conventional offline computation procedure.

\begin{figure}[t]
    \centering
    \includegraphics[width=1.0\linewidth]{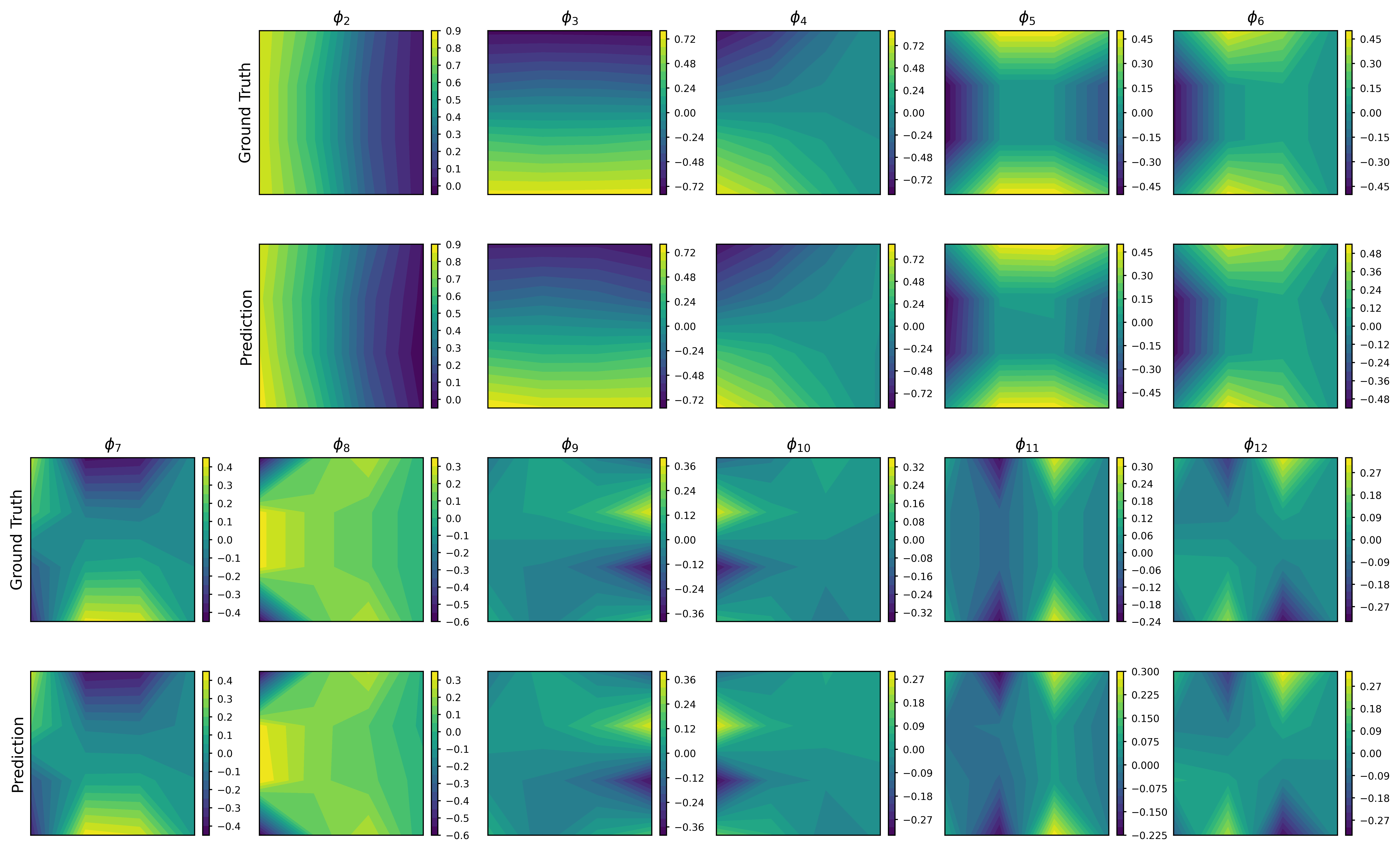}
    \caption{Schematic diagram of multiscale basis functions (2-12). For each two lines, the top denotes the ground truth, also the reference solutions, while the bottom refers to the predicted solutions.}
    \label{fig:basis_pred}
\end{figure}

\begin{figure}[ht]
    \centering
    \includegraphics[width=1.0\linewidth]{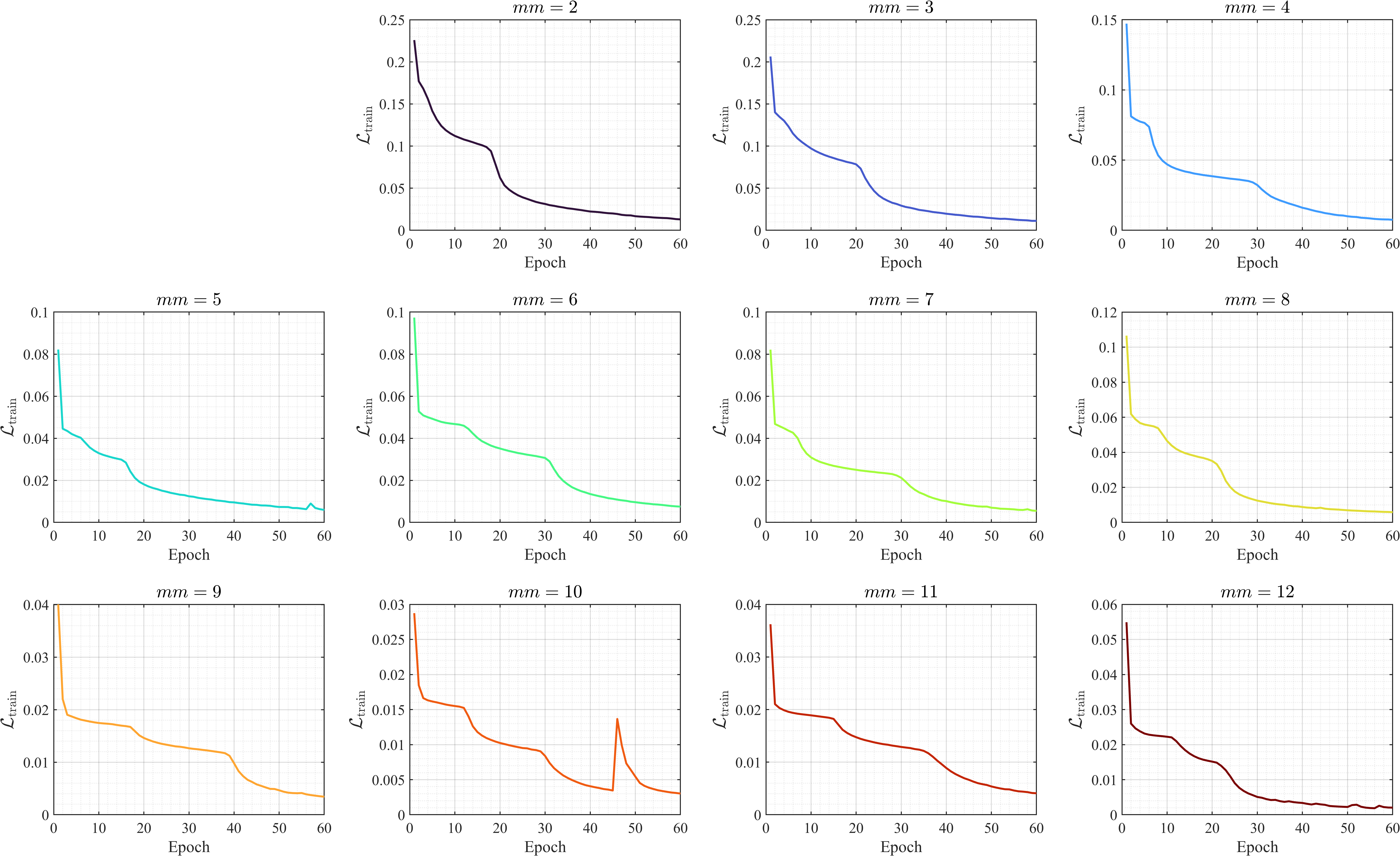}
    \caption{Learning curves of multiscale basis functions.}
    \label{fig:learning_curve}
\end{figure}

\begin{table}[ht]
  \centering
  \caption{Performance metrics for different multiscale basis functions}
  \label{tab:performance_metrics}
  \setlength{\tabcolsep}{8pt}
  \renewcommand{\arraystretch}{1.15}
  \begin{tabular}{ccc|@{\hspace{1.2cm}}ccc}
    \toprule
    \textbf{mm} & \textbf{MSE} & \textbf{R$^2$} & \textbf{mm} & \textbf{MSE} & \textbf{R$^2$} \\
    \midrule
       &        &        & 7  & 0.0030 & 0.9594 \\
    2  & 0.0056 & 0.9790 & 8  & 0.0023 & 0.9762 \\
    3  & 0.0081 & 0.9664 & 9  & 0.0013 & 0.9529 \\
    4  & 0.0079 & 0.5540 & 10 & 0.0016 & 0.9232 \\
    5  & 0.0038 & 0.9657 & 11 & 0.0014 & 0.9499 \\
    6  & 0.0028 & 0.9641 & 12 & 0.0008 & 0.9835 \\
    \bottomrule
  \end{tabular}
\end{table}

\begin{remark}
    It is worth noting that the predicted solutions in \hyperref[fig:basis_pred]{Figure~\ref{fig:basis_pred}} do not perfectly coincide with the reference solutions. This is expected, since for data-driven function approximation methods, even under large-sample conditions, the learned model generally provides only an asymptotic approximation to the true mapping rather than a perfect pointwise fit. Nevertheless, such a discrepancy is acceptable as long as it remains within a reasonable error tolerance. More importantly, the acceptability of this approximation error should not be judged solely at the level of basis-function fitting, but rather in terms of its influence on the final pressure solution. This issue, together with the corresponding tolerance criterion, will be discussed in more detail in the following sections.
\end{remark}

\begin{comment}
    % =====这里是part 2+3的手稿部分=====
    % 写mm=6的例子，不过在这之前需要写一下精度和效率的tradeoff，可以从Roff以及A的size入手，所以需要加入对应的图片（作为这部分的开头）
    回顾GMsFEM的框架，使用的多尺度函数越多，得到的逼近解越精确，但同时计算成本也越大（求解精度可见图A6）。这是因为使用的基函数数量会影响线性系统中的矩阵Roff^T BM^(-1)B^T Roff的size。以使用6个和12个基函数为例，见图5，左右两边分别展示了不同Roff和三对角矩阵的稀疏性和size，这种条件需要我们在求解精度和计算效率之间进行取舍。
    我们以mm=6为例，同时我们也提供了其他情况的算例在附录2。为说明我们的方法具有优越性，我们将结果与一些知名的模型进行对比：傅里叶神经算子（FNO）、U-Net、physics-informed neural network（PINN）和GMsFEM（mm=6），并以绝对误差与相对L2误差为评价指标进行评判。测试算例的求解见图6，同时图7中的直方图与表格展示了相对L2误差的情况。
\end{comment}

A review of the GNsFEM framework shows that using a larger number of multiscale basis functions generally leads to more accurate approximate solutions, but it also results in a higher computational cost, a illustrated bu the solution accuracy in \hyperref[fig:ms_solution_rel_L2_err]{Figure~\ref{fig:ms_solution_rel_L2_err}}. The main reason is that the number of basis functions directly affects the size of the matrix in final linear system
\begin{equation}
    R^T_{\mathrm{off}}BM^{-1}B^TR_{\mathrm{off}}p_h=R_{\mathrm{off}}^TF \notag
\end{equation}
Taking the cases with 6 and 12 basis functions as examples (\hyperref[fig:sparse_system]{Figure~\ref{fig:sparse_system}}). On the left and right, the sparsity patterns and sizes of the corresponding $R_{\mathrm{off}}$ matrices and tridiagonal matrices, respectively. This observation indicates that a trade-off must be made between solution accuracy and computational efficiency in practical computations. 

\begin{figure}[!h]
    \centering
    \includegraphics[width=0.7\linewidth]{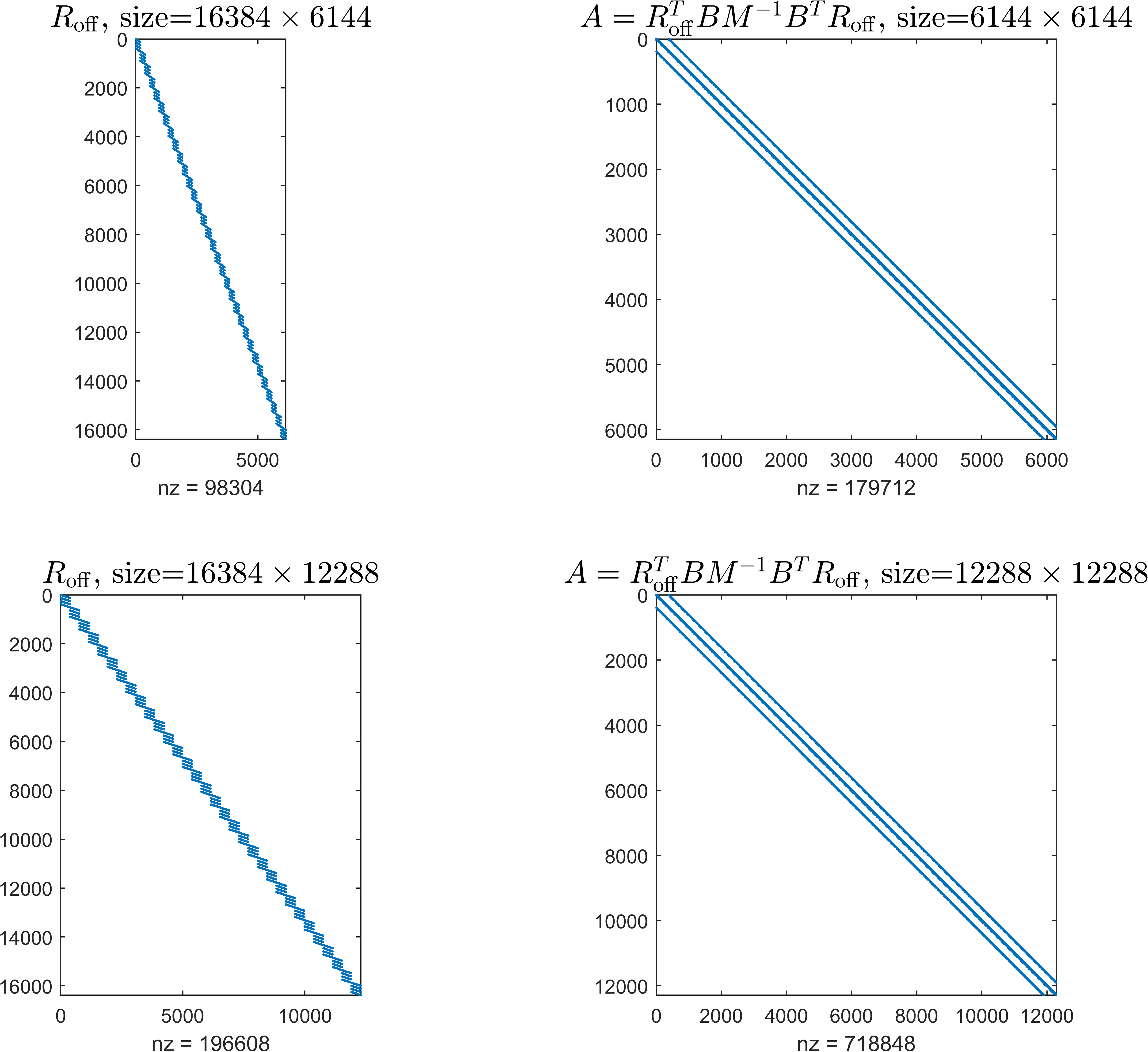}
    \caption{Schematic diagrams of sparse matrix in the linear system $Ap=F$. Top: case using 6 basis functions; bottom:case using 12 basis functions. $nz$ refers to the number of non-sparse points.}
    \label{fig:sparse_system}
\end{figure}

In this work, we mainly consider the case $mm=6$, while additional numerical results for other settings are provided in \hyperref[appendix:pre_sol]{\ref{appendix:pre_sol}}. To demonstrate the superiority of the proposed method, we compare it with several well-known models, including FNO \cite{li2020fourier}, U-Net \cite{liu2025learning}, physics-informed neural network (PINN) \cite{liu2025adaptive,rassi2019physics}, and GMsFEM ($mm=6$) \cite{chen2020generalized}. Absolute error and relative $L2$ error are adopted as the evaluation metrics to assess the solution performance of different methods. The solution results for the test examples are shown in \hyperref[fig:ref_baseline]{Figure~\ref{fig:ref_baseline}}, while the histogram and tables in \hyperref[fig:rel_L2_baseline]{Figure~\ref{fig:rel_L2_baseline}} further present the relative $L2$ error statistics.

\begin{figure}[h]
    \centering
    \includegraphics[width=1.0\linewidth]{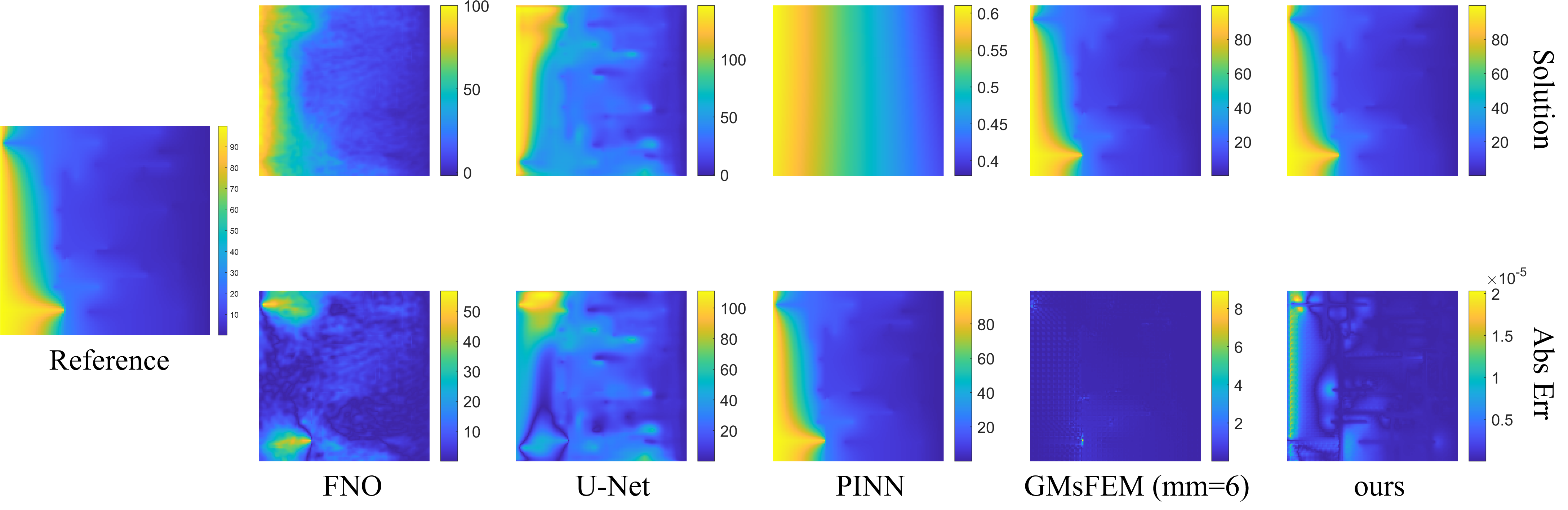}
    \caption{Comparison of our model and baseline models. Left: reference solution; Right Top: pressure solutions; bottom: abstract error. In this case, 'ours' also refers to that we use $mm=6$ to solve the linear system.}
    \label{fig:ref_baseline}
\end{figure}

\begin{figure}[h]
    \centering
    \includegraphics[width=1.0\linewidth]{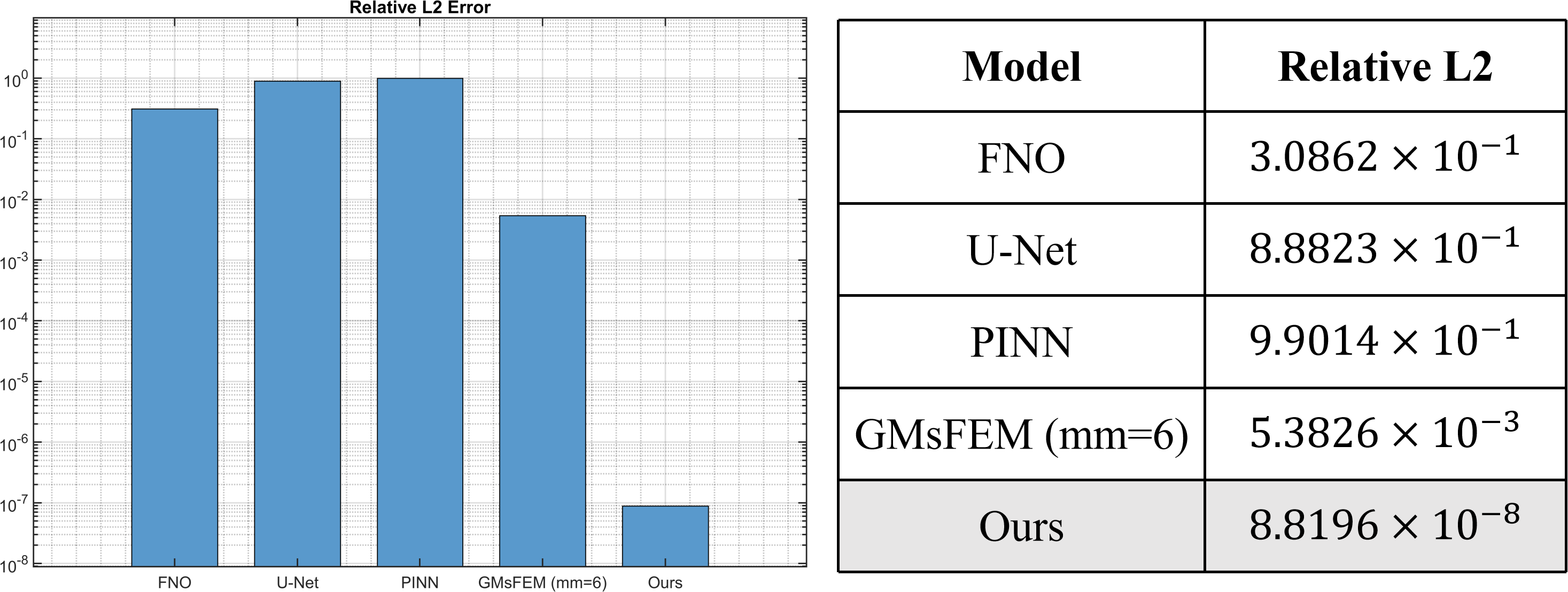}
    \caption{Relative L2 error of ours and baseline models}
    \label{fig:rel_L2_baseline}
\end{figure}

\begin{comment}
    综合以上结果，我们认定我们的方法在Darcy方程的求解上具备最优的准确性。FNO和U-Net能够学习到一定的空间特征，但是对于细节部分并没有能力进行学习。PINN在强异质性、高对比度系数的条件下对物理场的学习出现了明显的困难。而GMsFEM虽然也能实现比较理想的精度，但由于其重复的局部计算带来了明显的计算成本的增加，虽然数值方法在可解释性上具备优越性，与我们的方法相比，在高分辨率情景下并无明显优势。
\end{comment}
Based on the above results, we conclude that the proposed method achieves the best overall accuracy for solving the Darcy equation. The comparison results show that, although FNO and U-Net are capable of learning certain spatial features, their ability to capture local details and complex structures remains limited. PINN exhibits clear difficulty in learning the physical field under strongly heterogeneous and high-contrast coefficient settings, which leads to relatively large discrepancies between its predictions and the reference solutions. Although GMsFEM can also achieve reasonably good accuracy, it inherently relies on repeated local offline computations, resulting in a noticeably higher computational cost. Therefore, while traditional numerical methods possess a natural advantage in interpretability, they do not show a clear overall advantage over the proposed method in high-resolution scenarios.

In addition, the complexity analysis shows that, for an input size of $1\times128\times128$, the proposed model requires 7.6631 GFLOPs for a single forward pass. The main computational cost is concentrated in the high-resolution convolutional layers of the U-Net decoder, while the additional overhead introduced by the Fourier branch and attention modules remains relatively limited. Overall, the proposed method maintains high predictive accuracy while keeping the computational complexity within an acceptable range, thereby achieving a favorable balance between accuracy and efficiency.

%% file: section7_conclusion.tex
\section{Conclusion and Future Research}
This work presents a hybrid framework for multiscale Darcy solving in which learning is used only for the rapid prediction of multiscale basis functions, while the subsequent global assembly and pressure solve are still carried out by a two-grid preconditioner. The numerical results show that the framework improves final pressure reconstruction relative to several representative learning-based methods and remains stable under strongly heterogeneous and high-contrast coefficient settings. In comparison with traditional GMsFEM, its main advantage lies in reducing the computational cost of the basis-construction stage rather than replacing the subsequent global numerical solution procedure. By retaining the two-grid solver, the proposed approach improves the efficiency of basis generation while preserving the stability and interpretability of the underlying numerical framework. Overall, these results suggest that accelerating multiscale basis construction through learning, while combining it with a mature numerical solver, provides a viable strategy for high-resolution Darcy-type problems. Future work will consider extensions to three-dimensional settings and more complex models, together with a more rigorous theoretical analysis linking basis-prediction error to global solution error.

%% file: sec_supp_info.tex
\section*{Acknowledgment}
The research of Peiqi Li is supported by the Postgraduate Research Scholarship of Xi'an Jiaotong-Liverpool University (FOSA2412003).

\section*{Credit Author Contribution Statement}
\textbf{Peiqi Li}: Data Curation, Methodology, Project Administration; Validation, Visualization, Writing-original draft; \textbf{Jie Chen}: Supervision, Writing-review; \textbf{Shubin Fu}: Methodology, Proof reading.

\section*{Declaration of conflicting interests}
The authors declared no potential conflicts of interest with respect to the research, authorship, and publication of this article.

%% file: appendix_multiscale_solution.tex
\section{GMsFEM-based Solutions}
\label{appendix:ms_sol}
\begin{figure}[!htbp]
    \centering
    \includegraphics[width=1.0\linewidth]{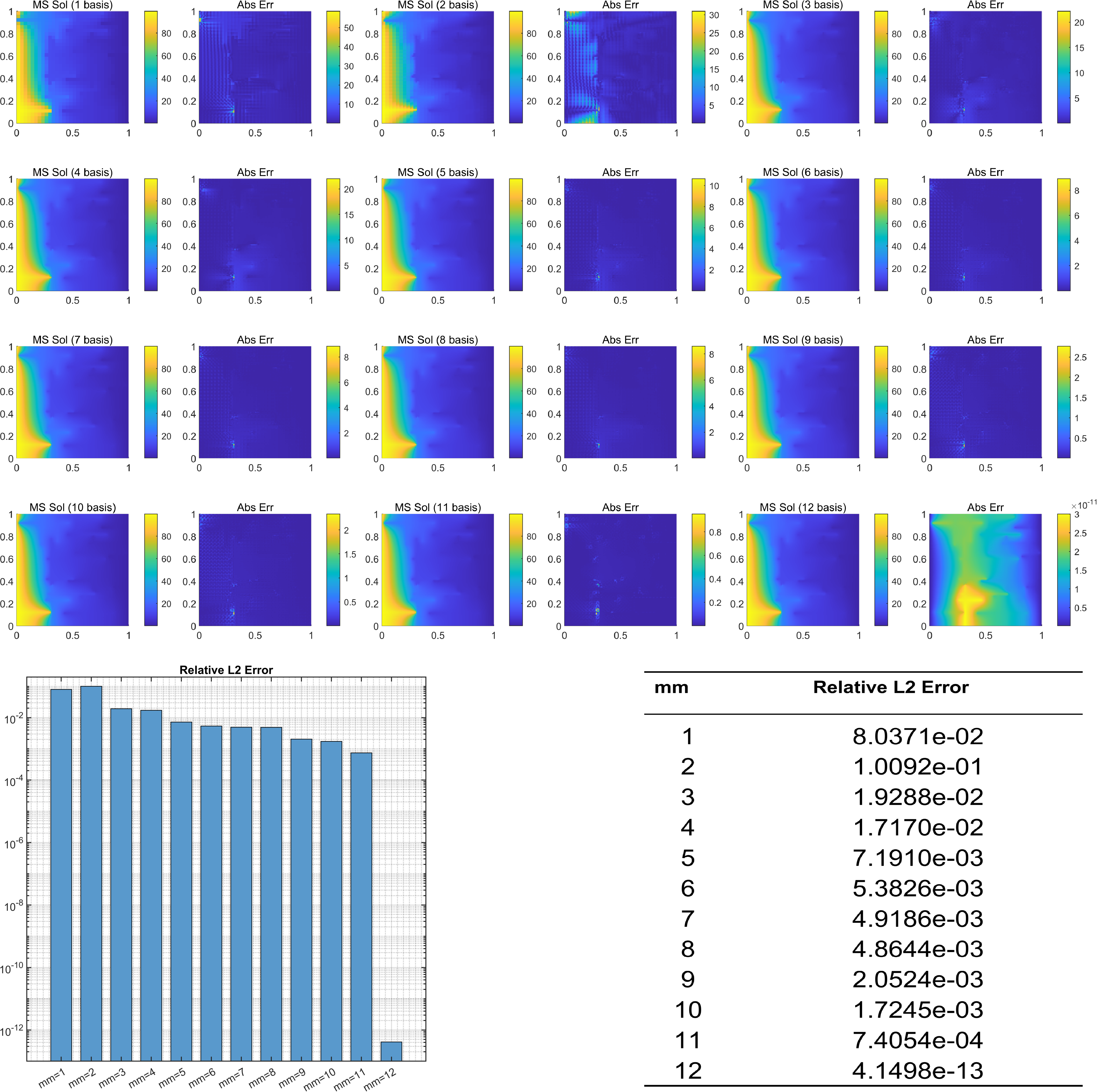}
    \caption{Schematic diagram of the solution based on GMsFEM. Upper part: Pressure solutions using different numbers of multi-scale basis functions (which we refer to as multiscale solutions) and the corresponding absolute errors. Lower part, left side: Histogram of relative L2 error; Bottom right part: Specific values of relative L2 error. MS Sol: Multiscale Solution; Abs Err: Absolute Error; $mm$: Number of used multiscale basis functions.}
    \label{fig:ms_solution_rel_L2_err}
\end{figure}

%% file: appendix_preconditioned_solutions.tex
\newpage
\section{Predicted Solutions Using Different Numbers of Multiscale Basis Solutions}
\label{appendix:pre_sol}

\begin{figure}[!h]
    \centering
    \includegraphics[width=1.0\linewidth]{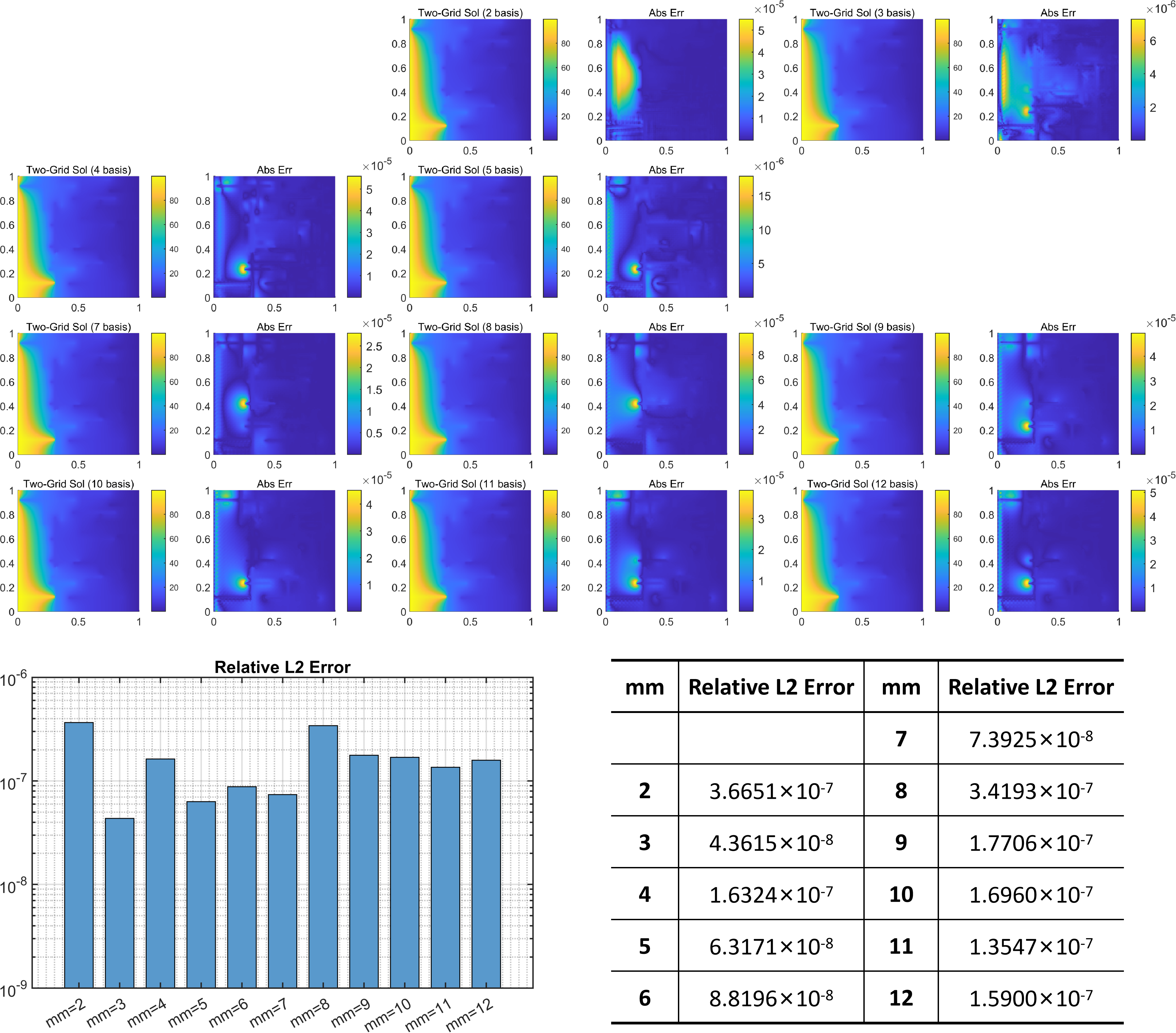}
    \caption{Schematic diagram of the solutions of iterative two-grid algorithm using $\hat{R}_{\mathrm{off}}$. Upper part: Pressure solutions using different numbers of predicted multi-scale basis functions in two-grid preconditioner and the corresponding absolute errors. Lower part, left side: Histogram of relative L2 error; Bottom right part: Specific values of relative L2 error. Abs Err: Absolute Error; $mm$: Number of used multiscale basis functions. In this figure, the part of '$mm=1$' means using only one basis function that is not learning using networks, so it keep the same as the one in \hyperref[appendix:ms_sol]{\ref{appendix:ms_sol}}. Meanwhile, the part '$mm=6$' is shown in \hyperref[sec:experiments]{Section~\ref{sec:experiments}}.}
    \label{fig:2grid_solution_rel_L2_err}
\end{figure}